
\documentclass[letterpaper, 10 pt, conference]{ieeeconf}  

\IEEEoverridecommandlockouts                              
\overrideIEEEmargins


\pdfminorversion=4

\usepackage{amsmath, amssymb, amsfonts, mathtools}
\usepackage{graphicx}
\usepackage{algorithm}
\usepackage{algorithmic}
\usepackage{bm}
\usepackage{tikz} 
\usepackage{pgfplots} 
\pgfplotsset{compat=1.17} 
\usepackage{ulem}
\normalem
\usetikzlibrary{shapes.geometric, arrows} 
\usepackage{comment}
\usepackage{booktabs}
\usepackage{adjustbox}
\usepackage{array} 
\usepackage[hidelinks]{hyperref}

\expandafter\def\expandafter\normalsize\expandafter{%
    \normalsize%
    \setlength\abovedisplayskip{2pt}%
    \setlength\belowdisplayskip{2pt}%
    \setlength\abovedisplayshortskip{4pt}%
    \setlength\belowdisplayshortskip{0pt}%
}

\newtheorem{theorem}{Theorem}
\newtheorem{lemma}{Lemma}
\newtheorem{definition}{Definition}

\newtheorem{proposition}{Proposition}

\newtheorem{remark}{Remark}

\definecolor{green1}{rgb}{0.2,0.7,0.5}

\definecolor{green1}{rgb}{0.2,0.7,0.5}


\title{ \LARGE \bf A Soft Inducement Framework for Incentive-Aided Steering of No-Regret Players}
\author{ Asrin Efe Yorulmaz,  Raj Kiriti Velicheti, Melih Bastopcu, and  Tamer~Ba{\c s}ar 
\thanks{Research of AEY, RKV, and TB was supported in part by the US Army Research Office (ARO) Grant W911NF-24-1-0085 and in part by the NSF Grant ECCS 23-49418. Research of MB was supported in part by Tubitak Bilgem EDGE-4-IoT and Tubitak 2232-B Fellowship (Project No:124C533). Asrin Efe Yorulmaz, Raj Kiriti Velicheti, and Tamer~Ba{\c s}ar are with the Coordinated Science Laboratory at the University of Illinois Urbana-Champaign, Urbana, IL, USA-61801. Melih Bastopcu is with the Department of Electrical and Electronics Engineering, Bilkent University, Ankara, Turkey, 06800. (Emails: \texttt{\{ay20,rkv4,basar1\}@illinois.edu}, \texttt{bastopcu@bilkent.edu.tr})}%
}

\begin{document}

\maketitle

\maketitle

\begin{abstract}
In this work, we investigate a steering problem in a mediator-augmented two-player normal-form game, where the mediator aims to guide players toward a specific action profile through information and incentive design. We first characterize the games for which successful steering is possible. Moreover, we establish that steering players to any desired action profile is not always achievable with information design alone, nor when accompanied with sublinear payment schemes. Consequently, we derive a lower bound on the constant payments required per round to achieve this goal. To address these limitations incurred with information design, we introduce an augmented approach that involves a one-shot information design phase before the start of the repeated game, transforming the prior interaction into a Stackelberg game. Finally, we theoretically demonstrate that this approach improves the convergence rate of players' action profiles to the target point by a constant factor with high probability, and support it with empirical results.
\end{abstract}
\section{Introduction}
In strategic interactions, information plays a crucial role in shaping the decisions of agents. A mediator can influence the outcome of a game by controlling the information available to the players \cite{bacsar2024inducement}. This control is particularly valuable in multi-agent settings, where individual decisions have system-wide effects \cite{babichenko2021multi}. We motivate the integration of information design into a steering problem by highlighting its applications across economics, governance, and digital markets. In economic contexts, persuasion mechanisms are widely employed to influence consumer behavior alongside monetary incentives \cite{mccloskey1995one}. Governments use public information campaigns to encourage participation in social programs \cite{egorov2020persuasion}, advertisers strategically release information to shape consumer preferences \cite{johnson2006simple}, and online platforms curate content to influence user engagement \cite{ke2022information}. Similarly, financial markets and regulatory bodies use signaling mechanisms to steer investor expectations \cite{goldstein2016bayesian}, affecting market conditions \cite{goldstein2018stress}.

While long-term (asymptotic) effects are often the focus of traditional analysis, many real-world scenarios demand a stronger emphasis on transients. For instance, in emergency situations such as disaster management \cite{bosmans2022systematic} or financial crises \cite{blinder2015financial}, ensuring short-term adaptation is crucial for mitigating immediate risks. These considerations motivate our results on improving transient regret, ensuring that strategic decisions align with objectives in both the short run and the long run.

Our work focuses on an information-aided incentive design problem in a two-player normal-form “investment game”, which is repeated over \( T \) rounds. Also, due to the nature of the information design problem, the games we consider are inherently Bayesian Normal Form Games (BNFGs) \cite{harsanyi1967}. The players, modeled as no-regret learners employing EXP3.P algorithms \cite{bubeck2012regret}, receive public signals about the state of the world from the mediator and accordingly choose actions, as the state of the world is not observable to them.

The mediator’s objective is to steer the players toward a specific strategy profile in an empirical sense. However, we demonstrate that achieving this objective using information design alone is not always feasible. Furthermore, we show that this goal cannot be accomplished with sublinear payments for all cases. The results on the feasibility of successful steering are provided in Table~\ref{Table_1}. 

To address these limitations we propose an augmented method that incorporates a round of information designs prior to the repeated game, which can be modeled as a Stackelberg game \cite{von1934marktform}. This provides better initial conditions for the players, thereby improving the convergence rate of their directness gap to
\( 
\tilde{\mathcal{O}}\!\left(\frac{\sqrt{K}}{\kappa\sqrt{T}}\left(4\sqrt{\ln(1/\pi^*)} + 2\sqrt{\ln(K/\delta)}\right)\right)
\)   
for each signal instance, where \(\kappa\) is the minimum deviation cost from the best hindsight action, and \( \pi^* \) is the probability of choosing the best hindsight action, compared to the usual  
\(
\tilde{\mathcal{O}}\left(\frac{\sqrt{K}}{\kappa\sqrt{T}}\left(4\sqrt{\ln(K)} + 2\sqrt{\ln(K/\delta)}\right)\right)
\)   
rate, which holds with probability of at least \( 1-\delta \), where \( K \) represents the number of available actions and \( T \) denotes the number of time steps the game has progressed \cite{bubeck2012regret}.\footnote{We use \(\tilde{\mathcal{O}}(\cdot)\) to denote the leading term in \(T\) and its constants.} Thus, the improvement for each signal instance improves the overall convergence bound. 

Our work aligns with research on steering rational agents through incentive design \cite{mguni2019coordinating, liu2022inducing}. The closest work to our setting is \cite{zhang2023steering}, which incorporates both incentive and information design. However, their focus is on how to guide no-regret agents toward a Nash equilibrium and its variants, whereas our goal is a more general notion of action profiles. Additionally, we optimize for mediator's signaling strategy, which we demonstrate to be non-trivial. Table~\ref{Table_1} summarizes our main contributions and contrasts them with key results from a closely related work of \cite{zhang2023steering}.\footnote{In Table~\ref{Table_1}, a $\checkmark$ denotes that the given action profile can be reached by the corresponding method, and a $\times$ indicates existence of a counterexample.}

Our work also relates to the strategic communication problem, originating from \cite{crawford1982strategic} and formalized as Bayesian persuasion in \cite{kamenica2011bayesian}. Furthermore, we leverage extensions of correlated equilibrium to Bayesian games from \cite{bergemann2016bayes}. Lastly, our approach relates to multi-armed bandits in adversarial environments, as repeated normal-form games can be modeled in this framework. Thus, we leverage results from \cite{hartline2015no} to analyze no-regret learning algorithms in Bayesian games.

\begin{table}[h]
\centering
\caption{Comparison of Game Types and Properties}\label{Table_1}

\begin{adjustbox}{max width=\textwidth}
\begin{tabular}{|>{\centering\arraybackslash}p{1.22cm}|>{\centering\arraybackslash}p{1.25cm}|>{\centering\arraybackslash}p{1.35cm}|>{\centering\arraybackslash}p{1.25cm}|>{\centering\arraybackslash}p{1.25cm}|}
\hline
\shortstack{\textbf{Property}\\\textbf{}\\\textbf{}\\\textbf{}  \\\textbf{}  } & \shortstack{\textbf{}\\\textbf{\!\!BNFG with}\\\textbf{\!\!Strictly}\\\textbf{\!\!Dominating }\\\textbf{\!\!Strategies}} & \shortstack{\textbf{\!\!\!Perfect}\\\textbf{\!\!\!Information}\\\textbf{\!\!\!Normal Form}\\\textbf{\!\!\!Games}} & \shortstack{\textbf{\!BNFG w/o}\\\textbf{\!Strictly}\\\textbf{\!Dominating}\\\textbf{\!Strategies}} & \shortstack{\textbf{\!BNFG w/o}\\\textbf{\!Strictly}\\\textbf{\!Dominating}\\\textbf{\!Strategies}} \\ 
\hline
\raisebox{-1ex}{Targeted} Points & \raisebox{-0.75ex}{Dominant} \raisebox{-0.75ex}{Strategy Eq.} & \raisebox{-0.75ex}{Pure}\raisebox{-0.75ex}{ Nash} \raisebox{-1ex}{Equilibrium} & \raisebox{-1.80ex}{Bayes CE} & \raisebox{-0.75ex}{Mediator} \raisebox{-0.75ex}{Decided Pt.} \\
\hline
Information (Advice) & \raisebox{-1.25ex}{\checkmark \hspace{2pt}(Lem.\hspace{2pt}\ref{lemma_dom})}  & \raisebox{-1.25ex}{-}  & \raisebox{-1.25ex}{$\times$\hspace{2pt}(Thm.\hspace{2pt}\ref{Lemma3})}  & \raisebox{-1.25ex}{$\times$\hspace{2pt}(Thm.\hspace{2pt}\ref{Lemma3})}  \\
\hline
Sublinear Payments & \raisebox{-1.25ex}{\checkmark \hspace{2pt}(Lem.\hspace{2pt}\ref{lemma_dom})} & \raisebox{-1.25ex}{\checkmark\hspace{2pt}\cite{zhang2023steering}} & \raisebox{-1.25ex}{$\times$\hspace{2pt}\cite{zhang2023steering}} & 
\raisebox{-1.25ex}{$\times$\hspace{2pt}(Thm.\hspace{2pt}\ref{Lemma4})}  \\
\hline
Sublinear Payments \& Advice & \raisebox{-1.25ex}{\checkmark \hspace{2pt}(Lem.\hspace{2pt}\ref{lemma_dom})} & \raisebox{-1.25ex}{-} & \raisebox{-1.25ex}{\checkmark\hspace{2pt}\cite{zhang2023steering}} & \raisebox{-1.25ex}{$\times$\hspace{2pt}(Thm.\hspace{2pt}\ref{Lemma4})} \\
\hline
Linear Payments & \raisebox{-1.25ex}{\checkmark\hspace{2pt}(Lem.\hspace{2pt}\ref{lemma_dom})} & \raisebox{-1.25ex}{\checkmark\hspace{2pt}\cite{zhang2023steering}} & \raisebox{-1.25ex}{\checkmark\hspace{2pt}\cite{zhang2023steering}} & \raisebox{-1.25ex}{\checkmark\hspace{2pt}\cite{zhang2023steering}} \\
\hline
\end{tabular}
\end{adjustbox}

\end{table} 

\section{Problem Formulation}

We consider a steering problem involving a mediator in a game with two players, each making investment decisions under uncertainty. This uncertainty is captured by the state of the world denoted by \( \theta \in  \Theta= \{G, B\}  \), which is not directly observable by the players but is known and leveraged by the mediator. The prior probability of the good state ($G$) is given by \( \psi(G) = \psi \), and of the bad state ($B$) with \( \psi(B) = 1 - \psi \), where \( 0 < \psi < 1 \), and is known to all players. 

Each player $i \in \mathcal{I} = \{1,2\}$ chooses an action $a_i \!\in\! \mathcal{A}\! = \!\{I, N\}$, where $I$ denotes \textit{invest} and $N$ denotes \textit{not invest}. Furthermore, we denote the other player's actions by $a_{-i} \in \mathcal{A}$. Each player aims to maximize its received utility in each repetition of the game by trying to leverage the information provided by the mediator. To capture inter-player externalities, we introduce feature vectors $ \mathbf{f}_1, \mathbf{f}_2 \in \mathbb{R}^d \!$, representing characteristics of players 1 and 2, respectively. The externality parameter \( z \) is modeled as a function of the alignment between these feature vectors, \( z = \phi\left( \langle \mathbf{f}_1, \mathbf{f}_2 \rangle \right),\) where \( \langle \cdot, \cdot \rangle \) denotes the inner product. The function \( \phi: \mathbb{R} \to \mathbb{R}_+ \) is monotonically increasing, reflecting the intuition that greater alignment between the players' features leads to stronger externalities. Furthermore, we assume symmetry among the players; thus, players share identical preferences over outcomes and are equally affected by the externality parameter \( z \).

In contrast, the mediator aims to steer players into the point of her desire. In our framework, the mediator, who knows the game structure and players being no-regret learners, provides monetary incentives along with public signaling regarding the state of the world. We utilize the standard definition of regret formalized as follows.
\begin{definition}
Let $\mathcal{A}$ be the set of possible actions available to a player. Suppose that the player selects an action $x_t \in \mathcal{A}$ at each time step $t \in \{1, \cdots, T\}$ and receives a corresponding reward $u_t(x_t)$. Then, the \textit{external regret} is defined as:
\begin{align} \nonumber\\[-19pt]
    R(T) = \max_{x^* \in \mathcal{A}} \sum_{t=1}^{T} u_t(x^*) - \sum_{t=1}^{T} u_t(x_t).  \label{regret}  
\end{align}
\end{definition}

Formally, the mediator first commits to a stationary signaling policy \( \pi(s|\theta) \), which specifies the probability of sending signal \(s \in \{g, b\} = S\) given the underlying state \(\theta\). The signaling strategy is parameterized as, $ 
\pi(g|G) =1-\pi(b|G)= \alpha,$ $\pi(g|B) = 1-\pi(b|B)= \beta,$ where \(0 \leq \alpha, \beta \leq 1\). Furthermore, we assume these parameters stays constant throughout the repeated game. Additionally, the mediator can select a payment function
$\nu_i: \mathcal{A} \times \mathcal{A} \to [0,P]$ for each player \( i \), where each \( \nu_i \) is continuous in the player's action. The modified utility for player $i$ becomes \( 
v_i^{(t)}(a^t_i, a_{-i}^t, \theta_t) = u_i(a^t_i, a_{-i}^t, \theta_t) + \nu _i(a^t_i, a_{-i}^t).
\) Furthermore, we introduce the notion of the directness gap, which can be defined over the deviations between the joint actions of the players, \( x^{(t)}\), and the target profile of the mediator  \( d \in \mathcal{A} \times \mathcal{A}\) throughout the game, as:
\begin{align}\notag\\[-18pt] 
\delta(T) = \frac{1}{T} \sum_{t=1}^T \mathbf{1}\{x^{(t)} \neq d\}.
\end{align}

We now introduce the steering problem. In general, the steering problem asks whether and under what conditions players can be guided to a specific action profile. The mediator's objectives are twofold. First, the time-averaged payments must be minimized. Second, the players' actions should become indistinguishable from the target equilibrium \( d \in \mathcal{A} \times \mathcal{A} \), meaning that the directness gap converges to 0. Thus, we investigate how information design can be incorporated to improve the steering capabilities of the mediator in games that involve information asymmetry. To denote how the repeated game proceeds, we define the decision rule $\sigma_{i,t}(a_i|s)$ as a time-dependent variable that stands for player $i$ choosing action $a_i$ given the signal $s$ at time step $t$. Then, denoting the history of the actions until the timestep $t$ by $h_t$, the game proceeds at each time-step as follows:
\begin{enumerate}
    \item Mediator commits to a stationary signaling policy, $\pi(\cdot|\theta)$, and a stationary incentive mechanism, $\nu_i$. 
    \item At time $t$, the state, \(\theta_t\), is realized.
    \item Mediator samples a public signal \(s_t \sim \pi(\cdot | \theta_t)\) from the signaling policy \(\pi\).
    \item The players observe the signal \(s_t\), and each player \( i \) samples its action \( a_i^t \sim \sigma_{i,t}(\cdot | s_t,h_t) \).
    \item Each player receives a reward based on its actions and the realized state, \( 
    v_i^{(t)} \!=\! u_i(a^t_i, a_{-i}^t, \theta_t) \!+ \!\nu_i(a^t_i, a_{-i}^t).
    \)
    \item Players update their strategies, $\sigma_{i,t}(\cdot | s_{t+1},\!h_{t+1})$, according to $v^{(t)}_i$. Steps 2–6 are repeated until $t=T$.
\end{enumerate}

As, the mediator chooses its policy based on the underlying static game, we introduce a suitable notion of equilibrium and solution set. The payoff matrix of the focused game is given by:  
\begin{align*}
\begin{tabular}{c|cc}
\centering
 & Player 2: $I$ & Player 2: $N$ \\ \hline
Player 1: $I$ & $(z + y_\theta, z + y_\theta)$ & $(z, 0)$ \\
Player 1: $N$ & $(0, z)$ & $(0, 0)$ \\
\end{tabular}
\end{align*}
Here, $y_\theta$ denotes either $y_G$ or $y_B$ based on the realized state $\theta \in \{G,B\}$; $y_G$ and $y_B$ are parameters representing the additional payoffs in good and bad states, respectively, with $y_B < 0 < y_G$. Then, the stationary  joint action matrix formed according to a provided signal \( s_j \) is given as:
\begin{align*}
\centering
\begin{tabular}{c|cc}
 & Player 2: $I$ & Player 2: $N$ \\ \hline
Player 1: $I$ & $\gamma_j$ & $\alpha_j - \gamma_j$ \\
Player 1: $N$ & $\alpha_j - \gamma_j$ & $1 - 2\alpha_j + \gamma_j$ \\
\end{tabular}
\end{align*}

Here, $\alpha_j$ is the probability of a player playing action $I$ given signal $s_j$ and $\gamma_j$ is the probability that both players play action $I$ given signal $s_j$. These probabilities satisfy $0 \leq \gamma_j \leq \alpha_j \leq 1$ and $1-2\alpha_j+\gamma_j \geq 0$. Letting \(\sigma(a \mid s)\), or \(\sigma(a_i,a_{-i} \mid s)\) denote the stationary probability of joint action \(a\) given signal \(s\), the expected utility for player \(i\) is then defined as: 
\begin{align}\label{eq:expected_utility_final}
    \mathbb{E}[u_i] &= \sum_{\theta} \psi(\theta) \sum_{s} \pi(s|\theta)  
    \sum_{a_{i}}\sum_{a_{-i}} \sigma(a_i, a_{-i}|s)u_i(a_i, a_{-i}, \theta) \nonumber\\[-6pt]
    &= \psi \pi(g|G) \left[ \gamma_g (z + y_G) + (\alpha_g - \gamma_g) z \right]\nonumber \\
    &\quad + \psi \pi(b|G) \left[ \gamma_b (z + y_G) + (\alpha_b - \gamma_b) z \right] \nonumber\\
    &\quad + (1 - \psi) \pi(g|B) \left[ \gamma_g (z + y_B) + (\alpha_g - \gamma_g) z \right]\nonumber \\
    &\quad + (1 - \psi) \pi(b|B) \left[ \gamma_b (z + y_B) + (\alpha_b - \gamma_b) z \right].
\end{align}
Thus, substituting the signaling probabilities, \(\pi(g|G) = \alpha\) and \(\pi(g|B) = \beta\), we obtain:
\begin{align}
    \mathbb{E}[u_i] &= \psi \alpha ( \gamma_g y_G + \alpha_g z ) + \psi (1 - \alpha) (\gamma_b y_G + \alpha_b z)  \\
    & +\! (1\! - \!\psi) \beta ( \gamma_g y_B  \!+\! \alpha_g z )\! + \! (1 \!-\! \psi) (1 \!-\! \beta) (\gamma_b y_B \!+ \!\alpha_b z).\nonumber
\end{align}

Following this, a natural notion of equilibrium in games with publicly observed signals is \textit{Bayes Correlated Equilibrium (BCE)} \cite{bergemann2016bayes}, defined formally as follows.  
\begin{definition}
A strategy profile \(\sigma(a | s)\) is said to belong to the BCE set if no player has an incentive to deviate for any signal \(s\) and any alternative action \(a_i' \in \{I, N\}\). This condition, known as the BCE compliance, is given by:
\begin{align}
\label{eq:bce_constraint}
\sum_{\theta}\sum_{a_{-i}} \psi(\theta) \pi(s|\theta) &\big(\sigma(a_i, a_{-i}|s) u_i(a_i, a_{-i}, \theta) \nonumber \\[-7pt] 
&- \sigma(a_i', a_{-i}|s) u_i(a_i', a_{-i}, \theta)\big) \geq 0.
\end{align}
\end{definition}

Additionally, the variables must satisfy the following feasibility conditions as mentioned above:
\begin{equation}
    0 \leq \gamma_j \leq \alpha_j \leq 1, \quad 1-2\alpha_j+\gamma_j \geq 0, \quad \forall j \in \{g, b\}.
\end{equation}

We next address steering no-regret players toward a target strategy profile.

\section{Steering No-regret Players Toward the Target Strategy Point}

The problem of steering no-regret players toward a specific strategy point in our setting can be reduced, without loss of generality, to the problem of guiding players to \( (I,I) \). This is because the role of \( (I,I) \) ranges from being strictly dominant to being dominated as we vary the utilities. To achieve this, we formalize the definitions of no-regret learning dynamics and Bayes-CCE (BCCE) set, and provide a proof of convergence of no-regret players to such an equilibrium. This approach builds upon analogous results presented in \cite{hartline2015no}, where the proof was provided based on population-based interpretation of Bayesian games. Formally, we define the no-regret property as follows: 
\begin{definition}
An algorithm is said to be a \textit{no-regret algorithm} if its external regret, as defined in \eqref{regret}, grows sublinearly with respect to the number of time steps \( T \). Formally, we have:
\begin{align}
    \!\!\!\!\!\frac{R(T)}{T} \!=\! \frac{1}{T} \!\left( \!\max_{x^* \in \mathcal{A}} \sum_{t=1}^{T} u_t(x^*) \!-\! \!\sum_{t=1}^{T} \!u_t(x_t) \!\right) \!\to \!0  \text{ as } T \!\!\to \!\infty.\!
\end{align}
\end{definition}
Furthermore, we define a BCCE set as follows. 
\begin{definition}
A strategy profile \(\sigma(a| s)\) is said to belong to the BCCE set if, for each player \( i \), and any alternative action \( a_i' \in \mathcal{A}_i \), the following condition holds:
\begin{align}
\mathbb{E}_{\psi, \pi, \sigma} \Big[  u_i(a_i, a_{-i}, \theta) \Big] 
 \geq 
\mathbb{E}_{\psi, \pi, \sigma} \Big[ u_i(a_i', \!a_{-i}, \theta) \Big].
\end{align}
\end{definition}
Then, for the convergence of no-regret players in repeated BNFGs, we present the following lemma.
\begin{lemma}\label{lem:bcc} Under separate no-regret learning dynamics for each signal instance, the joint empirical distribution
\begin{align}\label{eqn:emp_dist}
 \!\!\! \!  D_T(\theta,s,a_1,a_2) \!=\! \!\frac{1}{T}\!\sum_{t=1}^{T} \mathbf{1}\{\theta_t\!=\!\theta, s_t\!=\!s, a_1^t\!=\!a_1, a_2^t\!=\!a_2\} \!\!\!  
\end{align}
converges almost surely to $ 
D(\theta,s,a_1,a_2)\!=\!\psi(\theta)\,\pi(s\mid\theta)$ $\sigma(a_1,a_2\mid s)$. Moreover, \(D(\theta,s,a_1,a_2)\) satisfies the BCCE conditions.
\end{lemma}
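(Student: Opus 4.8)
The plan is to split the claim into a marginal law-of-large-numbers statement for the state--signal pairs, a factorization of the joint limit, and a no-regret argument that delivers the BCCE inequalities. First I would observe that the pairs $(\theta_t,s_t)$ are i.i.d. draws from $\psi(\theta)\pi(s\mid\theta)$, independent of the players' past play, so the strong law of large numbers gives $\frac{1}{T}\sum_{t=1}^T\mathbf{1}\{\theta_t=\theta,s_t=s\}\to\psi(\theta)\pi(s\mid\theta)$ almost surely. In particular, for every signal with $\Pr(s)=\sum_\theta\psi(\theta)\pi(s\mid\theta)>0$, the count $N_T(s)=\sum_{t}\mathbf{1}\{s_t=s\}$ satisfies $N_T(s)/T\to\Pr(s)$ and $N_T(s)\to\infty$ a.s., so each signal recurs infinitely often and the separate learners are each run on an infinite subsequence.

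Next I would establish the factorization $D(\theta,s,a_1,a_2)=\psi(\theta)\pi(s\mid\theta)\sigma(a_1,a_2\mid s)$. The key structural fact is that each player samples $a_i^t\sim\sigma_{i,t}(\cdot\mid s_t,h_t)$ before the round-$t$ reward is revealed, so the action profile is conditionally independent of the fresh state given the observed signal and the history, $a^t\perp\theta_t\mid(s_t,h_t)$. Restricting to rounds with $s_t=s$, the indicator $\mathbf{1}\{\theta_t=\theta\}$ has conditional mean equal to the posterior $\Pr(\theta\mid s)$ and is independent of $a^t$, so $\big(\mathbf{1}\{\theta_t=\theta\}-\Pr(\theta\mid s)\big)\mathbf{1}\{a^t=(a_1,a_2)\}$ is a bounded martingale-difference sequence. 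A martingale strong law then yields
\[
\frac{1}{N_T(s)}\sum_{t:s_t=s}\mathbf{1}\{\theta_t=\theta,a^t=(a_1,a_2)\}-\Pr(\theta\mid s)\,\hat\sigma_T(a_1,a_2\mid s)\to 0\ \text{a.s.},
\]
where $\hat\sigma_T(\cdot\mid s)$ is the conditional empirical action distribution on signal-$s$ rounds. Multiplying by $N_T(s)/T\to\Pr(s)$ and using $\psi(\theta)\pi(s\mid\theta)=\Pr(s)\Pr(\theta\mid s)$ gives the stated limit, with $\sigma(\cdot\mid s)$ the limit of $\hat\sigma_T(\cdot\mid s)$.

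For the BCCE property I would invoke the per-signal no-regret guarantee of EXP3.P. For each signal $s$, player $i$, and fixed deviation $a_i'\in\mathcal{A}$, the regret accumulated over the signal-$s$ rounds is sublinear in $N_T(s)$ with high probability, so $\frac{1}{N_T(s)}\sum_{t:s_t=s}\big(u_i(a_i',a_{-i}^t,\theta_t)-u_i(a_i^t,a_{-i}^t,\theta_t)\big)$ is asymptotically non-positive. Writing each time-average as an expectation under $\hat\sigma_T(\cdot\mid s)$ and the empirical state frequencies, and passing to the limit using the factorization above, this becomes $\mathbb{E}[u_i(a_i,a_{-i},\theta)\mid s]\ge\mathbb{E}[u_i(a_i',a_{-i},\theta)\mid s]$ under $\Pr(\theta\mid s)\sigma(a_1,a_2\mid s)$; weighting by $\Pr(s)$ and summing over $s$ reproduces exactly the BCCE inequality of the definition.

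The main obstacle I anticipate is the convergence of the conditional empirical distribution $\hat\sigma_T(\cdot\mid s)$ to a single limit, since no-regret dynamics generally guarantee convergence only to the set of coarse-correlated equilibria rather than to a point. I would resolve this by viewing $\{D_T\}$ as a sequence in the compact probability simplex, extracting convergent subsequences, and showing via the steps above that every limit point both factorizes and satisfies the BCCE inequalities, so the claimed $\sigma$ is any such limit point and the convergence is understood as convergence to this BCCE set. A secondary technicality is the random, data-dependent number of summands $N_T(s)$, which I would handle using $N_T(s)\to\infty$ a.s. together with $R_i(N_T(s))/N_T(s)\to 0$ and a Borel--Cantelli argument to upgrade the high-probability EXP3.P bounds to almost-sure statements.
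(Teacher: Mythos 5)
Your proof is correct and follows the same skeleton as the paper's: a strong law of large numbers for the i.i.d.\ pairs $(\theta_t,s_t)$, a per-signal analysis of the empirical action frequencies, and a passage to the limit of the per-signal no-regret inequality to obtain the BCCE conditions (the paper performs this last step by defining the continuous linear functional $F_i(D_T)=\sum_{\theta,a}D_T\,u_i$ and invoking the Continuous Mapping Theorem, which is equivalent to your limit passage followed by summing over signals). Where you genuinely differ is that you supply arguments for three steps the paper asserts without full justification, and each of your patches addresses a real gap. First, the paper justifies the factorization $P(\theta,s,a_1,a_2)=P(\theta,s)\,P(a_1,a_2\mid s)$ only by the remark that actions depend on the signals alone; your martingale-difference argument --- using $a^t\perp\theta_t\mid(s_t,h_t)$, the posterior mean $\Pr(\theta\mid s)$, and a martingale strong law along the random subsequence with $N_T(s)\to\infty$ --- is the mechanism actually needed here, since the action process is history-dependent and a naive SLLN does not apply to the joint $(\theta_t,a^t)$ frequencies. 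Second, you correctly identify that no-regret dynamics guarantee convergence of $\hat\sigma_T(\cdot\mid s)$ only to the CCE \emph{set}, not to a point; the paper sidesteps this by ``assuming that each empirical frequency converges'' and citing the Cesa-Bianchi--Lugosi result, which is itself set-valued, so your compactness/limit-point resolution (every limit point of $\{D_T\}$ factorizes and satisfies the BCCE inequalities) is the honest statement of what is provable --- strictly, both your version and the lemma as stated should be read as almost-sure convergence to the set of distributions of the stated product form satisfying BCCE, with pointwise convergence holding only when that set is a singleton, which is precisely the regime the paper exploits later (Lemma~\ref{lemma_dom}, Proposition~\ref{Prop_1}). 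Third, your Borel--Cantelli upgrade of EXP3.P's high-probability regret bounds to an almost-sure $\epsilon_T\to 0$ statement fills a step the paper elides, since EXP3.P does not come with an a.s.\ guarantee out of the box. In short: same route, but your execution closes technical holes that the paper covers by citation and assumption, at the cost of a longer argument.
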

\begin{proof}
\!\!We take the joint empirical distribution as in (\ref{eqn:emp_dist}). 
\begin{align*}
 \!\!\! \!  D_T(\theta,s,a_1,a_2) \!=\! \!\frac{1}{T}\!\sum_{t=1}^{T} \mathbf{1}\{\theta_t\!=\!\theta, s_t\!=\!s, a_1^t\!=\!a_1, a_2^t\!=\!a_2\}. \!\!\!  
\end{align*}
Assuming that each empirical frequency converges to some stationary probability distribution, which we will show to be the case in later steps, we can write down the following expression:
\[
D_T(\theta,s,a_1,a_2) \xrightarrow{T\to\infty} P(\theta,s,a_1,a_2).
\]
Since, the players' actions depend only on the received signals, we can express the joint probability distribution as:
\[
P(\theta,s,a_1,a_2) = P(\theta,s) P(a_1,a_2 \mid s).
\]
Since the states \(\{\theta_t\}\) are i.i.d.\ with distribution \(\psi(\theta)\), and given \(\theta_t\), the signal \(s_t\) is drawn according to \(\pi(s\!\mid\!\theta_t)\), the pairs \((\theta_t,s_t)\) are i.i.d.. Hence, by the Strong Law of Large Numbers (SLLN),
\begin{align*}\\[-18pt]
 \frac{1}{T}\sum_{t=1}^{T} \mathbf{1}\{\theta_t=\theta,\; s_t=s\} \xrightarrow{\text{a.s.}} \psi(\theta)\,\pi(s\mid\theta)\\[-18pt]   
\end{align*}
for every \((\theta,s)\in \Theta\times S\). To show the convergence of the action profiles, fix a signal \(s\in S\) with \(P(s)>0\) and let \( 
T_s=\{t\le T: s_t=s\}.\) For each fixed action profile \((a_1,a_2)\in \mathcal{A}_1\times \mathcal{A}_2\), define the empirical frequency of actions for \(t\in T_s\) as, \( 
X_t = \mathbf{1}\{a_t^1=a_1,\; a_t^2=a_2\}.
\)
Then, the time-averaged frequency over rounds when \(s_t=s\) is given by
\begin{align*}\\[-18pt]
  \overline{X}_{|T_s|} = \frac{1}{|T_s|}\sum_{t\in T_s} X_t.\\[-20pt]  
\end{align*}
Since \(P(s)>0\), we have \(|T_s|\to\infty\) as \(T\to\infty\) almost surely. Moreover, it is well known by  \cite{cesa2006prediction} that, when all players use no-regret algorithms, the empirical frequencies of players' actions almost surely converge to the CCE of the static game. Also, it can be observed that, due to the SLLN, the i.i.d. and stationary nature of the provided signals and state transition probabilities, as \( T \to \infty \), for each given signal instance, an ``expected" or ``static" game is formed, where the payoffs of the players are sampled from an i.i.d. distribution accordingly. Thus, no-regret algorithms guarantee the convergence in the new ``static'' game. Hence, we obtain
\begin{align*}
 \overline{X}_{|T_s|} \xrightarrow{\text{a.s.}} \sigma(a_1,a_2\mid  s),   
\end{align*}
where \(\sigma(\cdot\mid s)\) is the limiting distribution over \(\mathcal{A}_1\times \mathcal{A}_2\) conditioned on the signal \(s\), subject to the CCE set of the stationary game. 
Therefore, it follows that:
\begin{align*}
 D_T(\theta,s,a_1,a_2) \!\!\xrightarrow{\text{a.s.}}\!\! D(\theta,s,a_1,a_2) \!\!=\!\! \psi(\theta)\,\pi(s\!\mid\theta)\,\sigma(a_1,a_2\!\mid\! s).   
\end{align*}
Since the state, signal, and action spaces are finite, and the payoff functions \(u_i(a_i,a_{-i},\theta)\) are bounded, we define, for each player \(i\), the mapping given the signal \(s\) as:
\begin{align*}
   \!F_i(D_T(\theta,\!s,\!a_i,\!a_{-i})\!)\!\!=\!\! \sum_{\theta\in \Theta} \sum_{a\in A} \!D_T(\theta,\!s,\!a_i,\!a_{-i})u_i(a_i,\!a_{-i},\!\theta).
\end{align*}
Thus, it is easy to see that the function \(D_T\mapsto F_i(D_T)\) is continuous. Given that, $a_i$ is the action profile of player $i$, the no-regret property guarantees that for every alternative fixed action \(a_i'\in \mathcal{A}_i\), the empirical distributions satisfy an approximate no-deviation inequality:
\begin{align*}
F_i(D_T(\theta,s,a_i,a_{-i})) \geq F_i(D_T(\theta,s,a'_i,a_{-i})) - \epsilon_T,
\end{align*}
\normalsize
with \(\epsilon_T\to 0\) as \(T\to\infty\). By the Continuous Mapping Theorem \cite{shao2003mathematical}, taking the limit as \(T\to\infty\) yields
\begin{align*}
  F_i(D(\theta,s,a_i,a_{-i})) \geq F_i(D(\theta,s,a'_i,a_{-i})),  
\end{align*}
for all \(a_i,a_i'\in \mathcal{A}_i\) and for each player \(i\). Then, summing both sides over the all possible signals, this expression becomes precisely the BCCE condition:
\begin{align*}
    \mathbb{E}_D\Bigl[u_i(a_i,a_{-i},\theta)\Bigr] \ge \mathbb{E}_D\Bigl[u_i(a_i',a_{-i},\theta)\Bigr],
\end{align*}
which concludes the proof.
\end{proof}

Having established convergence to the BCCE set under separate no-regret algorithms conditioned on the given signal instance, we now present the EXP3.P in Algorithm \ref{alg:exp3p}. 

The EXP3.P algorithm operates over an adversarial multi-armed bandit setting with $K$ arms, different actions, for $T$ rounds. It maintains a probability vector $p_t\in\Delta_K$ and cumulative gain estimates $\{\widehat G_{i,t}\}_{i=1}^K$. Initially, $p_1$ is uniform over all arms and $\widehat G_{i,0}=0$ for each $i$. At round $t$, the learner samples an arm $I_t\sim p_t$ and observes a gain $g_{I_t,t}\in[0,1]$. To correct for partial feedback and to obtain high-probability concentration, and limit the variance of gain estimates EXP3.P forms the importance-weighted estimate \( 
\widehat g_{i,t}
\), provided in Line 5 of Algorithm \ref{alg:exp3p}
with bias term $\beta\in[0,1]$, and updates $\widehat G_{i,t}=\widehat G_{i,t-1}+\widehat g_{i,t}$. The next distribution is obtained by exponentiating the scaled cumulative estimates and mixing with a uniform exploration floor
\( 
p_{t+1}(i)
\), given in Line 8 of Algorithm \ref{alg:exp3p},
where $\eta>0$ is the learning rate and $\gamma\in[0,1]$ controls minimum exploration. By intertwining the bias $\beta$ and the exploration floor $\gamma$, EXP3.P attains a regret bound of order $\mathcal{O}\bigl(\sqrt{T\,\ln(K/\delta)}\bigr)$ with probability at least $1-\delta$ \cite{bubeck2012regret}. In the following theorem, we relate each instance of the regret algorithm given $s$, to the overall regret.

\begin{algorithm}[H]
\caption{EXP3.P}
\label{alg:exp3p}
\begin{algorithmic}[1]
\REQUIRE Learning rate $\eta > 0$, parameters $\gamma, \beta,g_{i,t} \in [0,1]$, number of arms $K$
\STATE Initialize $p_1(i) \gets \frac{1}{K}$, and $\widehat{G}_{i,0} \gets 0$ for all $i \in \{1,\dots,K\}$
\FOR{$t = 1$ to $n$}
    \STATE Sample arm $I_t \sim p_t$
    \FOR{each arm $i = 1$ to $K$}
        \STATE Compute estimated gain:
        \( 
        \widehat{g}_{i,t} \gets \frac{g_{i,t}  \mathbf{1}\{I_t = i\} + \beta}{p_t(i)}
        \) 
       \STATE Update cumulative estimated gain:
        \( 
        \!\widehat{G}_{i,t} \!\!\gets\! \widehat{G}_{i,t\!-\!1} \!+\! \widehat{g}_{i,t}
        \) 
    \ENDFOR
    \STATE Update probabilities for next round:
    \[
    p_{t+1}(i) \gets (1 - \gamma) \cdot \frac{\exp(\eta \widehat{G}_{i,t})}{\sum_{k=1}^K \exp(\eta \widehat{G}_{k,t})} + \frac{\gamma}{K}
    \]
\ENDFOR
\end{algorithmic}
\end{algorithm}

\begin{theorem}
\label{thm:two-signal-exp3p-utilities}
Define the overall-regret across the signal instances by
\[
R_{ovr}(T)
:=
\sum_{s\in S}
\max_{a\in\mathcal{A}}
\sum_{t:\,s_t=s}\!\Big(u_{i,t}(a,a^t_{-i},\theta)-u_{i,t}(a^t_{i},a^t_{-i},\theta)\Big).
\]
Assume the following single-instance high-probability bound for \textsc{EXP3.P}:
there exists a constant $C>0$ such that for every horizon $T$, number of actions $|K|$ and confidence $\delta\in(0,1)$,
with probability of at least $1-\delta$, \( R(T) \;\le\;\tilde{R}(T;\delta) \;:=\; C\sqrt{K\,T\,\ln\!\Big(\tfrac{K}{\delta}\Big)}. \) 
Then, with probability of at least $1-\delta$,
\( 
R_{ovr}(T) \le \sqrt{2}\;\tilde{R}\Big(T;\tfrac{\delta}{2}\Big)
.
\) 
\end{theorem}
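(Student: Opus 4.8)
The plan is to treat $R_{ovr}(T)$ as a sum over the two signal instances, control each summand with the assumed single-instance \textsc{EXP3.P} guarantee evaluated at confidence $\delta/2$, combine the two failure events by a union bound, and then exploit concavity of the square root to recover the $\sqrt{2}$ factor. Since $S=\{g,b\}$ has exactly two elements, the target constants $\sqrt{2}$ and $\delta/2$ are precisely the two-instance manifestations of this decomposition. Concretely, I would first condition on the realized signal sequence $\{s_t\}_{t=1}^{T}$, which deterministically partitions the rounds into the index sets $T_g=\{t:s_t=g\}$ and $T_b=\{t:s_t=b\}$ with $|T_g|+|T_b|=T$. As established in Lemma~\ref{lem:bcc}, conditioned on the signal each player faces a well-defined stationary (``static'') game, so the subsequence of rounds carrying signal $s$ is exactly an $|T_s|$-round run of \textsc{EXP3.P}, whose regret equals the corresponding summand $\max_{a\in\mathcal{A}}\sum_{t:\,s_t=s}\bigl(u_{i,t}(a,a^t_{-i},\theta)-u_{i,t}(a^t_i,a^t_{-i},\theta)\bigr)$ of $R_{ovr}(T)$.

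Next I would apply the assumed bound to each instance with confidence parameter $\delta/2$: with probability at least $1-\delta/2$ the instance-$s$ regret is at most $\tilde{R}(|T_s|;\delta/2)=C\sqrt{K\,|T_s|\,\ln(2K/\delta)}$. A union bound over the two instances then guarantees that both inequalities hold simultaneously with probability at least $1-2\cdot\tfrac{\delta}{2}=1-\delta$. On this event,
\[
R_{ovr}(T)\le \tilde{R}(|T_g|;\tfrac{\delta}{2})+\tilde{R}(|T_b|;\tfrac{\delta}{2})=C\sqrt{K\,\ln(2K/\delta)}\,\bigl(\sqrt{|T_g|}+\sqrt{|T_b|}\bigr).
\]

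Finally, concavity of $\sqrt{\cdot}$ (equivalently Cauchy--Schwarz on the vector $(1,1)$ against $(\sqrt{|T_g|},\sqrt{|T_b|})$) gives $\sqrt{|T_g|}+\sqrt{|T_b|}\le\sqrt{2(|T_g|+|T_b|)}=\sqrt{2T}$, so that $R_{ovr}(T)\le\sqrt{2}\,C\sqrt{K\,T\,\ln(2K/\delta)}=\sqrt{2}\,\tilde{R}(T;\tfrac{\delta}{2})$, with equality approached when the signal is balanced, $|T_g|=|T_b|=T/2$; note this concavity step is exactly what sharpens the naive $2\,\tilde{R}(T;\tfrac{\delta}{2})$ bound (obtained from $|T_s|\le T$) to $\sqrt 2\,\tilde{R}(T;\tfrac{\delta}{2})$. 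The main obstacle to make rigorous is the \emph{random horizon}: each $|T_s|$ is data-dependent, so the single-instance guarantee---stated for a fixed horizon---must be invoked conditionally on the signal sequence, under which $|T_s|$ is a constant and the per-instance high-probability statement applies verbatim. One should also confirm that this conditioning does not disturb the adversarial-bandit feedback structure underlying the assumed bound, which holds because each player's update in Algorithm~\ref{alg:exp3p} for signal $s$ depends only on the past rounds carrying that same signal.
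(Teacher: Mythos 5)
Your proposal is correct and follows essentially the same route as the paper's proof: per-signal decomposition of $R_{ovr}(T)$, the single-instance bound invoked at confidence $\delta/2$ for each of the two signals, a union bound, and Jensen/Cauchy--Schwarz giving $\sqrt{|T_g|}+\sqrt{|T_b|}\le\sqrt{2T}$. Your additional remark on conditioning on the realized signal sequence to handle the random horizons $|T_s|$ is a careful refinement of a step the paper treats implicitly, not a different argument.
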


\begin{proof}
Let $n_s := \big|\{t\le T:\, s_t=s\}\big|$ denote the number of rounds in which signal $s$ appears, so that
$n_{s^{(1)}}+n_{s^{(2)}}=T$. Apply the single-instance high-probability bound to this subsequence
with confidence parameter $\delta_s>0$ to obtain the event
\( 
\mathcal{E}_s :=
\left\{
R(n_s)
\!\le \!\tilde{R}(n_s;\delta_s)
\right\},
\) 
which satisfies $\mathbb{P}(\mathcal{E}_s)\ge 1-\delta_s$. Choose $\delta_s=\delta/2$ for both signals and
invoke the union bound to get
\(
\mathbb{P}(\mathcal{E}_{s^{(1)}}\cap \mathcal{E}_{s^{(2)}})
\;\ge\; 1-\sum_{s\in S}\delta_s \;=\; 1-\delta.
\) 
On the intersection event $\mathcal{E}_{s^{(1)}}\cap \mathcal{E}_{s^{(2)}}$, we can sum the two bounds:
\[
R_{ovr}(T)
\;=\;
\sum_{s\in S}
R(n_s)
\;\le\;
\sum_{s\in S} \tilde{R}(n_s;\tfrac{\delta}{2}).
\]
By the assumed form of $\tilde{R}(\cdot\,;\cdot)$,
\[
\sum_{s\in S} \tilde{R}(n_s;\tfrac{\delta}{2})
\;=\;
C\sqrt{K\,\ln\!\Big(\tfrac{2K}{\delta}\Big)}\;\sum_{s\in S}\sqrt{n_s}.
\]
Finally, apply Jensen's inequality for the concave map:
\[
\sum_{s\in S}\sqrt{n_s}
\;\le\;
\sqrt{|S|\sum_{s\in S} n_s}
\;=\; \sqrt{2\,T}.
\]
Combining the last two displays yields, on $\mathcal{E}_{s^{(1)}}\cap \mathcal{E}_{s^{(2)}}$,
\[
R_{ovr}(T)
\le
C\sqrt{K\ln\!\Big(\tfrac{2K}{\delta}\Big)}\;\sqrt{2T}
=
\sqrt{2}\tilde{R}\Big(T;\tfrac{\delta}{2}\Big),
\]
which establishes the stated bound with probability of at least $1-\delta$.
\end{proof}

Upon reflecting on how to steer no-regret players toward specific BCCE points, it becomes evident that this occurs when the BCCE set is a singleton containing only the mediator's target action profile. From this, we identify two possible ways to achieve steering. The first case arises when each player has a strictly dominant action across all states. The second approach involves designing mechanisms to ensure that the BCCE set contains only a single point. Given these observations, next we formalize the first case.

\begin{lemma} \label{lemma_dom}
Let each player \( i \in \mathcal{I} \) have a strictly dominant action \( a_i^* \in \mathcal{A}_i \) in every state \( \theta \in \Theta \) and for every signal \( s \in S \) in the Bayesian game denoted by \( (\mathcal{I}, \mathcal{A}, \Theta, S,u) \). Then, the BCCE of the game is unique.
\end{lemma}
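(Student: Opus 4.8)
The plan is to first identify the unique candidate equilibrium and then show that every BCCE must coincide with it. I would begin by writing the BCCE compliance condition for the particular coarse deviation to the dominant action itself, i.e.\ setting $a_i' = a_i^*$ in the defining inequality, which gives
\[
\sum_{\theta}\sum_{s}\psi(\theta)\pi(s|\theta)\sum_{a_i,a_{-i}}\sigma(a_i,a_{-i}\mid s)\bigl[u_i(a_i,a_{-i},\theta)-u_i(a_i^*,a_{-i},\theta)\bigr]\ge 0.
\]
The idea is that strict dominance lets me sign this expression term-by-term: by hypothesis $u_i(a_i,a_{-i},\theta)-u_i(a_i^*,a_{-i},\theta)<0$ whenever $a_i\neq a_i^*$, for every $a_{-i}$ and every $\theta$, while the bracket vanishes when $a_i=a_i^*$.

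Next, since $\psi(\theta)>0$ for both states, the coefficient multiplying each weight $\sigma(a_i,a_{-i}\mid s)$ with $a_i\neq a_i^*$ is strictly negative for any signal $s$ occurring with positive probability (i.e.\ $\sum_{\theta}\psi(\theta)\pi(s|\theta)>0$). Hence the left-hand side is a sum of nonpositive terms, so the inequality can hold only if every such term vanishes, forcing $\sigma(a_i,a_{-i}\mid s)=0$ for all $a_i\neq a_i^*$ at each occurring signal. Applying the identical argument to the other player collapses the support to the single profile $(a_1^*,a_2^*)$, yielding $\sigma(a_1^*,a_2^*\mid s)=1$ for every occurring signal. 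I would then close the argument by verifying that this degenerate profile is itself a BCCE, establishing existence alongside uniqueness: substituting $\sigma=\delta_{(a_1^*,a_2^*)}$ into the compliance inequality for an arbitrary deviation $a_i'$ leaves
\[
\sum_{\theta}\sum_{s}\psi(\theta)\pi(s|\theta)\bigl[u_i(a_i^*,a_{-i}^*,\theta)-u_i(a_i',a_{-i}^*,\theta)\bigr],
\]
each summand of which is nonnegative by strict dominance.

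I expect the main subtlety to lie not in the algebra but in the bookkeeping around signals of zero probability: for such signals $\sigma(\cdot\mid s)$ is unconstrained and does not affect any expectation, so uniqueness should be stated modulo these null signals, or equivalently phrased as uniqueness of the induced joint distribution $D(\theta,s,a_1,a_2)$ from Lemma~\ref{lem:bcc}. I would make this caveat explicit. A second point worth flagging is that the single deviation to $a_i^*$ already suffices precisely because the BCCE deviation is coarse (unconditional on the recommendation); I would note this so the reader sees why ranging over all alternative actions is unnecessary, and double-check that strict dominance holds state-by-state so that the positive convex combination over $\theta$ induced by $\psi$ and $\pi$ preserves the strict sign.
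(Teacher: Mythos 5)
Your proposal is correct and takes essentially the same route as the paper's proof: strict dominance makes the coarse deviation to \(a_i^*\) strictly profitable against any \(\sigma\) placing positive mass on some \(a_i \neq a_i^*\), forcing the degenerate distribution on \((a_1^*, a_2^*)\) --- you simply run the term-by-term sign argument directly where the paper phrases it as a contradiction. Your two additions --- explicitly verifying that the degenerate profile is itself a BCCE, and noting that uniqueness holds only modulo signals of zero probability --- are sound refinements that the paper leaves implicit.
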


\begin{proof}
Since \( a_i^* \) is a strictly dominant action for player \( i \), it satisfies, \(
u_i(a_i^*, a_{-i}, \theta, s) > u_i(a_i, a_{-i}, \theta, s)
\), for every \( a_i \in \mathcal{A}_i \setminus\{a_i^*\} \), for all \( a_{-i} \in \mathcal{A}_{-i} \), \( \theta \in \Theta \), and \( s \in S \). In a BCCE, for each player \( i \), for every signal \( s \in S \), and for any alternative action \( a_i' \in A_i \), the following condition holds:
\begin{align*}
   \mathbb{E}_{\psi(\theta), \pi(s|\theta), \sigma}\!\Big[u_i(a_i,\! a_{-i}, \!\theta, \!s)\!\Big] \!\!\geq\!\! 
\mathbb{E}_{\psi(\theta), \pi(s|\theta), \sigma}\!\Big[u_i(a_i',\! a_{-i}, \!\theta, \!s)\!\Big]. 
\end{align*}
By the definition of \( a_i^* \), we have:
\[
\mathbb{E}_{\psi(\theta), \pi(s|\theta), \sigma}\!\Big[\!u_i(a_i^*,\! a_{-i}, \!\theta,\! s)\!\Big] \! > \!
\mathbb{E}_{\psi(\theta), \pi(s|\theta), \sigma}\!\Big[\!u_i(a_i,\! a_{-i},\! \theta,\! s)\!\Big]
\]
for all \( a_i \in A_i \setminus\{a_i^*\} \). Now, suppose for contradiction that there exists a BCCE \( \sigma \) where player \( i \) chooses an action \( a_i \neq a_i^* \) with positive probability for some signal \( s \). Then, consider a unilateral deviation by player \( i \) from the recommended action \( a_i \) to \( a_i^* \). Since \( a_i^* \) is a dominant action, \(  u_i(a_i^*, a_{-i}, \theta, s) > u_i(a_i, a_{-i}, \theta, s)\), for all \( (a_{-i}, \theta, s) \). Taking expectations over \( \psi(\theta) \) and \( \pi(s|\theta) \), we get:
\begin{align*}
   \mathbb{E}_{\psi(\theta), \pi(s|\theta)}\!\Big[u_i(a_i^*, a_{-i}, \theta, s)\!\Big] \!\!>\! 
\mathbb{E}_{\psi(\theta), \pi(s|\theta)}\!\Big[u_i(a_i, a_{-i}, \theta, s)\!\Big]. 
\end{align*}
This contradicts the BCCE condition, which requires that the expected utility from following the recommended action must be at least as good as any deviation. Thus, the only possible distribution \( \sigma \) that satisfies the BCCE condition is the degenerate distribution where:
\[
\sigma(a_1, \dots, a_n \mid s) = 1 \quad \text{if } a_i = a_i^* \text{ for all } i,
\]
and zero otherwise. Therefore, the BCCE is unique.
\end{proof}

\begin{remark}
 Since Lemma 2 imposes no assumptions on mechanisms, successful steering—which is defined as achieving zero directness gap between the players' action profiles and the target strategy point—is always feasible.
\end{remark}

Next, we analyze the ``investment'' game by categorizing it into two regions depending on the sign of \( z + y_B \). 

\begin{proposition}\label{Prop_1}
If $z + y_B > 0$, then $(I,I)$ is a strictly dominant strategy profile for both players, and therefore $(I,I)$ becomes the unique BCCE.
\end{proposition}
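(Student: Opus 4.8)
The plan is to verify directly that action $I$ strictly dominates $N$ for each player in every state, and then invoke Lemma~\ref{lemma_dom} to conclude uniqueness of the BCCE. First I would fix player~1 and read off the two relevant payoff comparisons from the payoff matrix. When player~2 plays $I$, player~1 earns $z+y_\theta$ from $I$ versus $0$ from $N$; when player~2 plays $N$, player~1 earns $z$ from $I$ versus $0$ from $N$. Hence $I$ strictly dominates $N$ precisely when $z+y_\theta>0$ for both states and $z>0$.

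Next I would show that the single hypothesis $z+y_B>0$ delivers all of these inequalities at once. Since $y_B<0<y_G$, we have $z+y_G>z+y_B>0$, so the ``opponent plays $I$'' comparison holds in both states $\theta\in\{G,B\}$. Moreover, $z+y_B>0$ together with $y_B<0$ forces $z>-y_B>0$, which handles the ``opponent plays $N$'' comparison. Therefore $I$ is a strictly dominant action for player~1 in every state, and by the assumed symmetry of the game the identical argument applies to player~2.

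Because the payoff matrix depends on the realized state only through $y_\theta$ and does not depend on the signal $s$, strict dominance in every $\theta$ immediately yields strict dominance for every pair $(\theta,s)$, which matches exactly the hypothesis of Lemma~\ref{lemma_dom}. I would then apply that lemma with $a_i^*=I$ to conclude that the BCCE set is a singleton; since the unique dominant-strategy profile is $(I,I)$, it is the unique BCCE.

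The argument is essentially bookkeeping, so there is no serious obstacle. The one point to state carefully is the chain of implications from $z+y_B>0$---in particular that it simultaneously forces $z>0$ and $z+y_G>0$---so that all four payoff comparisons are covered without introducing any additional assumption; the remaining step is then a direct citation of Lemma~\ref{lemma_dom}.
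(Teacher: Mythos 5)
Your proof is correct and follows essentially the same route as the paper: verify that $z+y_B>0$ makes $I$ strictly dominant for each player in every state (and every signal, since payoffs are signal-independent), then invoke Lemma~\ref{lemma_dom} to conclude the BCCE is the singleton $(I,I)$. In fact you are slightly more careful than the paper, which asserts the payoff $z$ from the profile $(I,N)$ is strictly positive without noting, as you do, that this follows from $z>-y_B>0$.
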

\begin{proof}
Since $y_G > 0$, we have \(  z + y_G > 0\). Thus, if $z + y_B > 0$, regardless of the state or the other player's action, the payoff for choosing action $I$ is strictly positive, while choosing action $N$ yields $0$. Thus, action $I$ strictly dominates action $N$ for each player in each state. As a result, due to Lemma \ref{lemma_dom}, $(I,I)$ becomes the unique  BCCE. 
\end{proof}

On the other hand, when \( z + y_B < 0 \), a strictly dominating strategy set does not exist, which necessitates the use of information and incentive design so that $(I,I)$ target point remains the unique BCCE point. To derive this from the definition of the BCCE, consider the inequality for player $i$ choosing $a_i = I$ over an alternative action $a_i' = N$:
\begin{align}
\sum_{\theta \in \{G,B\}} \sum_{s \in \{g,b\}} \sum_{a_{-i} \in \mathcal{A}_{-i}} &P(\theta,s,I,a_{-i}) \, u_i(I, a_{-i}, \theta) 
\geq \notag \\[-4pt]
\sum_{\theta \in \{G,B\}} \sum_{s \in \{g,b\}} \sum_{a_{-i} \in \mathcal{A}_{-i}} &P(\theta,s,N,a_{-i}) \, u_i(N, a_{-i}, \theta),
\end{align}
which can be simplified as:
\begin{align}\label{eqn:BCCE_cond}
\!\!\!\sum_{\theta \in \{G,B\}}\! \sum_{s \in \{g,b\}}\! \sum_{a_{-i} \in A_{-i}} P(\theta,s,I,a_{-i}) \, u_i(I, a_{-i}, \theta) \!\geq\! 0.\!
\end{align}

Using this definition, we can analyze how \((I,I)\) can be forced to be the unique BCCE point. Since the mediator's objectives are twofold—first, minimizing the amount paid to players, and second, reducing the directness gap to zero—we analyze three cases as follows: a) Steering with only information design, b) Steering with information design accompanied by sublinear payments, ensuring that the total amount paid remains finite, and  c) Steering with information design accompanied by linear payments. Thus, for these cases, we present the following two theorems demonstrating the non-feasibility of the first two cases, and then provide analysis of the last case.
\begin{theorem} \label{Lemma3}
There exists a game, within the provided setting, without strictly dominant strategies in which no-regret players cannot be steered toward a mediator's target strategy profile using only information design. Then, in such games, successful steering is not possible without incentives.
\end{theorem}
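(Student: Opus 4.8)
The plan is to produce an explicit counterexample living in the regime $z+y_B<0$ (the regime in which, as noted after Proposition~\ref{Prop_1}, no strictly dominant strategy exists) and to show that \emph{no} signaling policy $(\alpha,\beta)$ can make the target profile $(I,I)$ a BCCE. The linchpin is Lemma~\ref{lem:bcc}: under the no-regret dynamics the empirical distribution converges almost surely to a point in the BCCE set, so a zero directness gap relative to $(I,I)$ forces the limiting $\sigma$ to place unit mass on $(I,I)$ for every signal realized with positive probability. If that degenerate profile is never a BCCE, steering to $(I,I)$ is impossible by information design alone, and since there are no payments in this case, steering is impossible without incentives. I would first record the utilities needed from the payoff matrix: $u_i(I,I,\theta)=z+y_\theta$, $u_i(N,I,\theta)=0$, and $u_i(I,N,\theta)=z$, together with $y_B<0<y_G$ and $z>0$, which already rules out strict dominance (against $N$ action $I$ earns $z>0$, while against $I$ in state $B$ action $N$ earns $0>z+y_B$).

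The key step is to evaluate the BCCE membership condition for the candidate profile $\sigma(I,I\mid s)=1$. Taking $a_i=I$ and the alternative $a_i'=N$ in the BCCE definition, and using that $a_{-i}=I$ almost surely under this degenerate $\sigma$, the condition
\[
\mathbb{E}_{\psi,\pi,\sigma}\big[u_i(I,a_{-i},\theta)\big]\;\ge\;\mathbb{E}_{\psi,\pi,\sigma}\big[u_i(N,a_{-i},\theta)\big]
\]
collapses, after summing out the signal via $\sum_s\pi(s\mid\theta)=1$, to $z+\psi\,y_G+(1-\psi)\,y_B\ge 0$. The crucial observation I would emphasize is that this quantity is completely independent of $(\alpha,\beta)$: because the target prescribes action $I$ regardless of the received signal, the signaling kernel $\pi(s\mid\theta)$ marginalizes out and information design has no leverage on the relevant coarse deviation. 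Hence the mediator cannot influence whether $(I,I)$ is a BCCE through signaling alone.

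I would then select parameters that simultaneously satisfy $z+y_B<0$ and $z+\psi y_G+(1-\psi)y_B<0$ — for instance $z=1,\ y_G=2,\ y_B=-3,\ \psi=1/5$, giving $z+y_B=-2<0$ and $z+\psi y_G+(1-\psi)y_B=-1<0$ — so that the membership inequality fails for every signaling policy. By the argument above, $(I,I)$ is then not a BCCE under any $(\alpha,\beta)$, and Lemma~\ref{lem:bcc} precludes the empirical play from concentrating on $(I,I)$; thus the directness gap cannot be driven to zero by information design, and with no payments available, successful steering is impossible, establishing the claim. I expect the main obstacle to be the second sentence of the step above, namely arguing cleanly that the coarse (unconditional-over-signals) BCCE deviation is invariant to the signaling policy; the rest is a direct substitution into the payoff matrix and a choice of constants.
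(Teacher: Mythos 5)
Your proposal is correct and follows essentially the same route as the paper: both plug the degenerate target profile $\sigma(I,I\mid s)=1$ into the BCCE no-deviation condition, observe that the signaling kernel marginalizes out so the condition collapses to $\psi(z+y_G)+(1-\psi)(z+y_B)\ge 0$ independently of $(\alpha,\beta)$, and conclude that parameters violating this inequality defeat any information design. Your additions—explicit numerical constants and the explicit appeal to Lemma~\ref{lem:bcc} to tie BCCE non-membership to a nonvanishing directness gap—merely make concrete what the paper leaves terse ("the inequality cannot be forced in all cases"), and are sound.
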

\begin{proof} 
First, we can rewrite (\ref{eqn:BCCE_cond}) as:
    \begin{align*}
    &\sum_{\theta \in \Theta} \sum_{s \in S } \!\sum_{a_{-i} \in \mathcal{A}_{-i}} \!\!\!\!\!\psi(\theta) \, \pi(s \! \mid\! \theta) \sigma(I,a_{-i} \!\mid\! s) u_i(I, a_{-i}, \theta)\!\! \geq\! 0
    \end{align*}
which gives:    
    \begin{align*}
    & \!\sum_{\theta \in \Theta}\! \psi(\theta) \! \Big(\! \pi(g \!\mid\! \theta) \!\big(\sigma(I,I\! \!\mid\! g)  u_i(I, I, \theta) \! \!+\! \!\sigma(I,N \!\mid\! g) u_i(I, N, \theta)\big)  \\ & \!\!+\!\pi(b \mid \theta)  \big(\sigma(I,I\! \mid \!b) \, u_i(I, I, \theta) \!\! +\!\! \sigma(I,N\! \mid \!b) \, u_i(I, N, \theta)\big)\! \Big) \! \!\geq 0. 
    \end{align*}
   For the sake of the counterexample, we  consider:  
    \begin{align*}
    \sigma(I \mid g) = \sigma(I \mid b) = \sigma(I,I \mid g) = \sigma(I,I \mid b)=1.
    \end{align*}
    Under this case, the BCCE no-deviation condition becomes:
    \begin{align*}
    \psi(B) (z + y_B ) + \psi(G) (z + y_G) \geq 0.
    \end{align*}
    Depending on $y_B$, $y_G$, and $z$, the inequality cannot be forced in all cases. Thus, in such games, information design is not enough and we need to provide incentives as well, which completes the proof. Also, this counterexample directly extends to the infeasibility of steering to a point in the BCE set with just information design, by assuming equal action probabilities and showing that the BCE set cannot be forced to be a singleton due to the inequality from the definition of the BCE set.
\end{proof}

\begin{theorem} \label{Lemma4}
There exists a game, within the provided setting, where information design, even when supported by sublinear payments, is insufficient to steer no-regret players toward the mediator's target strategy profile. Consequently, in such games, successful steering is not possible without constant average payments.
\end{theorem}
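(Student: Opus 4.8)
The plan is to exhibit a single parametrized instance of the investment game and then argue by contradiction that no sublinear payment scheme, combined with any stationary signaling policy, can drive the directness gap to zero. First I would select parameters in the regime $z + y_B < 0$ (so no strictly dominant strategy exists, by Proposition \ref{Prop_1}) and additionally impose the strict deficit $D := -\bigl(\psi(G)(z+y_G) + \psi(B)(z+y_B)\bigr) > 0$. The point of this choice is the observation already exploited in Theorem \ref{Lemma3}: when the target profile $(I,I)$ is played deterministically, the signal becomes payoff-irrelevant, so the only BCCE no-deviation inequality that can bind reduces to $\psi(G)(z+y_G)+\psi(B)(z+y_B) \ge 0$, independent of $\alpha,\beta$. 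Hence $D>0$ certifies that information design alone cannot make $(I,I)$ a BCCE, and the remaining task is to quantify how much payment is unavoidable.

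Next I would assume, for contradiction, that there is a sublinear payment scheme---meaning the realized cumulative payment $\sum_{t=1}^T \nu_{i,t}(a_i^t,a_{-i}^t) = o(T)$---together with some signaling under which steering succeeds, i.e. $\delta(T) \to 0$. Writing $S_{II} = \{t \le T : a_1^t = a_2^t = I\}$, successful steering forces $|S_{II}| = (1-o(1))T$, and sublinearity of the realized payment forces $\sum_{t \in S_{II}} \nu_{i,t}(I,I) = o(T)$, since the complementary rounds number $o(T)$ and carry payments bounded by $P$. I would then invoke the no-regret property of the modified utilities $v_{i,t} = u_i + \nu_{i,t}$ against the single fixed comparator action $N$, which (summing the per-signal guarantees via Theorem \ref{thm:two-signal-exp3p-utilities}) yields
\begin{align*}
\sum_{t=1}^T v_{i,t}(N,a_{-i}^t) - \sum_{t=1}^T v_{i,t}(a_i^t,a_{-i}^t) \le o(T).
\end{align*}

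The heart of the argument is then a direct computation on the dominant rounds $S_{II}$, where $v_{i,t}(N,I) - v_{i,t}(I,I) = \nu_{i,t}(N,I) - (z+y_{\theta_t}) - \nu_{i,t}(I,I)$. Using $\nu_{i,t}(N,I) \ge 0$, the sublinearity $\sum_{t\in S_{II}}\nu_{i,t}(I,I)=o(T)$, and the fact that the $o(T)$ off-$S_{II}$ rounds contribute only $o(T)$ to the regret sum, the displayed inequality collapses to $\sum_{t\in S_{II}} (z+y_{\theta_t}) \ge -o(T)$. I would finish with the Strong Law of Large Numbers: since the $\theta_t$ are i.i.d. and $S_{II}^{c}$ contains only $o(T)$ rounds with bounded integrand, $\frac{1}{T}\sum_{t\in S_{II}}(z+y_{\theta_t}) = \psi(G)(z+y_G)+\psi(B)(z+y_B) + o(1) = -D + o(1)$ almost surely, giving $-D \ge -o(1)$ and hence $D \le 0$, contradicting $D>0$. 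The same chain, read quantitatively, shows that any successful scheme must obey $\frac{1}{T}\sum_t \nu_{i,t} \ge D - o(1)$, i.e. a strictly positive constant average payment, which is exactly the claimed conclusion.

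The step I expect to be the main obstacle is controlling the counterfactual deviation payment $\nu_{i,t}(N,I)$: because payments are non-negative, the mediator cannot exploit the $N$-payment to slacken the no-regret inequality, and making this monotonicity rigorous---rather than appealing loosely to a ``time-averaged modified game''---is what actually forces the lower bound. A secondary subtlety is the state--action correlation induced by signaling; I sidestep it by never conditioning on $\theta$ inside $S_{II}$, instead bounding the small complement $S_{II}^{c}$ and applying the SLLN to the full sum over $t$, so that the correlation never enters the estimate.
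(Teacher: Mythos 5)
Your proof is correct, but it takes a genuinely different route from the paper's. The paper argues structurally within its equilibrium framework: it first classifies which stationary payment schemes can even be sublinear under successful steering (payments on the target profile $(I,I)$ or on action $I$ would be paid a constant fraction of rounds, so only constant payments $q$ on the off-target profiles $(I,N)$ and $(N,I)$ remain viable), then substitutes $q$ into the BCCE no-deviation inequality---parameterizing the joint play through a correlation term $\rho$---and shows for a signal-independent target that the resulting constraint pins $\sigma(I\mid b)$ to a unique point only in a degenerate boundary case, so the target cannot be forced to be the unique BCCE. Your argument bypasses the BCCE-convergence machinery (Lemma~\ref{lem:bcc}) entirely: you fix an instance with strict expected deficit $D>0$ at the target, and run a direct dynamic contradiction---no-regret against the single fixed comparator $N$, nonnegativity of the counterfactual payment $\nu_{i,t}(N,I)$, sublinearity of realized payments on $S_{II}$, and the SLLN applied to the full sum with the $o(T)$ complement subtracted off (which cleanly sidesteps the state--signal--action correlation the paper absorbs into $\rho$). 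What your approach buys: it handles arbitrary, even time-varying, payment schedules in $[0,P]$ rather than only constant off-target payments, it needs no assumption that play converges to a BCCE, and it is quantitative---the same chain yields $\frac{1}{T}\sum_t \nu_{i,t} \geq D - o(1)$, anticipating the paper's separate payment-lower-bound discussion via $M$ and $\kappa$. What the paper's route buys is a characterization flavor: it identifies \emph{which} payment structures fail and why, within the BCCE geometry, rather than only exhibiting a deficit instance. One small point to tighten in a write-up: the no-regret guarantee for EXP3.P is a high-probability statement per horizon, so you should either fix $\delta$ and derive the contradiction for all large $T$ on the intersection with the SLLN event, or take a summable $\delta_T$ schedule; as sketched this is routine but worth stating.
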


\begin{proof}
We denote the joint action probability of players as, \( \sigma(I,I|s) = \sigma^2(I|s) + \rho \sigma(I|s)(1-\sigma(I|s)) \), 
where \(\rho\) represents the correlation between the two no-regret players, due to inference between players' learning processes. Note that, introducing constant payments for the cases \((I, I)\) or \(I\) does not result in sub-linear average payments. Also, vanishing payments alone are insufficient, as they cannot enforce BCCE on the players indefinitely. Therefore, the only viable option is to introduce constant payments for the \((I, N)\) and $(N, I) $ cases. Formally from \eqref{eqn:BCCE_cond}, it can be seen that constant payments in these cases cannot ensure the players' convergence to a specific BCCE point, as:
    \begin{align*}
    &\sum_{\theta \in \Theta} \sum_{s \in S } \!\sum_{a_{-i} \in \mathcal{A}_{-i}} \!\!\!\!\!\psi(\theta) \, \pi(s \! \mid\! \theta) \sigma(I,a_{-i} \!\mid\! s) u_i(I, a_{-i}, \theta)\!\! \geq\! 0
    \end{align*}
which gives:    
    \begin{align*}
    & \!\sum_{\theta \in \Theta}\! \psi(\theta) \! \Big(\! \pi(g \!\mid\! \theta) \big(\sigma(I,I\! \!\mid\! g)  u_i(I, I, \theta) \! +\! \sigma(I,N \!\mid\! g) u_i(I, N, \theta)\big)  \\[-6pt] & \!\!+\!\pi(b \mid \theta)  \big(\sigma(I,I\! \mid \!b) \, u_i(I, I, \theta) \!\! +\!\! \sigma(I,N\! \mid \!b) \, u_i(I, N, \theta)\big) \Big) \! \!\geq 0. 
    \end{align*}
    Furthermore, this expression can be written as follows:    
\begin{align*}
    & \psi(B) \bigg[  \, \pi(g \mid B) \, [ (\sigma(I \mid g) - {\sigma(I \mid b)} ) (z+q) \\[-3pt] & + (\sigma(I,I \mid g) - {\sigma(I,I \mid b)})(y_B-q)] + \sigma(I \mid b) (z+q) \\[-3pt] & 
    + \sigma(I,I \mid b) (y_B-q) \bigg]  + \psi(G) \bigg[  \, \pi(g \mid G) \, [ (\sigma(I \mid g) \\[-3pt] & - \sigma(I \mid b) ) (z+q) + (\sigma(I,I \mid g) - \sigma(I,I \mid b))(y_G-q)]  \\[-3pt] & + \sigma(I \mid b) (z+q) + \sigma(I,I \mid b) (y_G-q) \bigg] \geq 0
\end{align*}
where $q$ is the constant payment introduced in $(N,I)$ and $(I,N)$. For the counterexample, we can assume the target point as, $\sigma(I \mid g) = \sigma(I \mid b)$, and $\sigma(I,I \mid g) = \sigma(I,I \mid b)$. For this specific case, we have that:
\begin{align*}
    & \psi(B)\bigg[(z+q) + [\sigma(I|b)+\rho(1-\sigma(I|b)](y_B-q)\bigg] \\ & + \psi(G)\bigg[(z+q)+(y_G-q)[\sigma(I|b)+\rho(1-\sigma(I|b))]\bigg] \geq 0 \\
    & \text{which can be further manipulated into:}  \\  
    & \bigg[\psi(B)y_B+\psi(G)y_G-q\bigg]\bigg[\sigma(I|b)(1-\rho)+\rho\bigg] \geq -z-q. 
\end{align*}
If $(\psi(B)y_B+\psi(G)y_G) < 0 $, we have: 
\begin{align*}
    & \sigma(I|b) \leq \bigg[ \frac{-z-q}{(\psi(B)y_B+\psi(G)y_G)-q}-\rho \bigg] \frac{1}{(1-\rho)}.
\end{align*}
Then, \(\sigma(I|b)\) happens to be the unique point in the BCCE set, only when both sides of inequality meets at \(0\). Thus, such an incentive scheme cannot guarantee successful steering to a specific point in the general case, concluding the proof. 
\end{proof}

As we have demonstrated that constant payments are necessary for successful steering in all games, we introduce a payment scheme by deriving a lower bound on the payments required for each round. For this purpose, we define the payment bound \( M \) as the total payments made, averaged over the sequences game has proceeded. Such a payment bound \( M \), which ensures successful steering, can be derived using the players' external regret. This follows from the fact that regret minimization ensures that players will match the best fixed hindsight action given the signal.

To identify the critical deviation, consider the scenario where a player deviates to action \( N \) in the bad state \( \theta = B \), and good state \( \theta = G \). As mediator commits to an incentive and information design scheme prior to observe the states of the world, we assume that it introduces constant payment across the all states.

Furthermore, toward bounding deviations, let \( D_\theta \) denote the total number of deviations for given state. By the definition of the overall-regret, the cumulative deviation cost must satisfy 
\begin{align}
    R_{ovr}(T) = D_G  (M+z+y_G) + D_B (M+z+y_B).
\end{align}
Then, to bound total deviations using the bound we have derived for $ R_{ovr}(T)$ we choose $M > z+y_B$, which makes both terms of the summand positive and $(I,I)$ dominant action profile. Then, denoting $\kappa = \min\{M+z+y_G, M+z+y_B\}$, we bound the directness gap with probability of at least $1-\delta$ as: 
\begin{align}\label{directness_gap_bound}
\delta(T) = \!\frac{D}{T} \!\le\! \frac{R_{ovr}(T)}{\kappa T}\!\le\!\frac{\sqrt{2}\;\tilde{R}\Big(T;\tfrac{\delta}{2}\Big)}{\kappa T}
\end{align}
\section{Stackelberg Optimization Framework}
\label{sec:stackelberg}
Given the infeasibility of introducing stationary signaling policy into the repeated game and the necessity of constant payments, we incorporate information design as a prior interaction between the mediator and the players. This interaction is modeled as a Stackelberg game, where the mediator is the leader, and the players are the followers who can coordinate. More specifically, since both players are symmetric and have identical preferences, they optimize over the same utility function. Furthermore, we provide solutions for the mediator's policy, where it prioritizes its steering objective. The model is based on the stationary game before the repeated game, with its payoff and action probability matrices given in Section II.

At the \textit{upper level}, the mediator selects the stationary signaling policy parameters \( (\alpha, \beta) \), which do not evolve over iterations, to maximize its own objective function. At the \textit{lower level}, given the mediator's choices \( (\alpha, \beta) \), each player chooses its strategy parameters \( (\alpha_g, \alpha_b, \gamma_g, \gamma_b) \) to maximize its expected utility \( \mathbb{E}[u] \) while satisfying the constraints imposed by the Bayesian Correlated Equilibrium (BCE).\footnote{Here, as the players have the symmetric utility functions, we denote the expected utility function of any of the two players by $\mathbb{E}[u]$.} For each follower's optimization problem, the expected utility to be maximized is given by:
\begin{equation}
    \mathbb{E}[u] = A_g \alpha_g + B_g \gamma_g + A_b \alpha_b + B_b \gamma_b,
\end{equation}
where the coefficients $A_g$, $B_g$, $A_b$, and $B_b$ are defined based on the upper-level variables \((\alpha, \beta)\) and other parameters:
\begin{align}
    &A_g \!= 
    \psi \alpha z + (1 - \psi) \beta z,  \\
    &B_g \!= 
    \psi \alpha y_G + (1 - \psi) \beta y_B,  \\
    &A_b \!=
    \psi (1 - \alpha) z + (1 - \psi) (1 - \beta) z, \\
    &B_b \!= 
    \!\psi (1 \!-\! \alpha) y_G \!+ \!(1 \!-\! \psi) (1\! -\! \beta) y_B,
\end{align}
where $A_j$'s are always positive by definition. Furthermore, for each type \(j \in \{g, b\}\), the decision variables must satisfy the following constraints, arising from the BCE and action probability constraints: 
\begin{align}
    &A_j \alpha_j \!+\! B_j \gamma_j\! \geq\! 0, \hspace{4pt}
    0 \!\leq \!\gamma_j \!\leq \!\alpha_j \!\leq 1, \hspace{4pt}
    1\! - \!2\alpha_j \!+ \!\gamma_j \!\geq \!0.
\end{align}

Note that the constraints for the good state (\(j = g\)) and the bad state (\(j = b\)) are independent. Hence, the lower-level problem can be decomposed into two separate two-dimensional linear programs. Each subproblem can be solved independently to obtain the optimal values of \((\alpha_j^*, \gamma_j^*)\) for \(j \in \{g, b\}\). For each \(j \in \{g, b\}\), the optimization subproblem is defined as:
\begin{align}
\max_{\{\alpha_j, \gamma_j\}}  \quad & \mathbb{E}[u_j] = A_{j} \alpha_j + B_{j} \gamma_j, \quad j\in \{g,b\}  \label{eq:objective_j}\\
\mbox{s.t.} \quad & A_j \alpha_j + B_j \gamma_j \geq 0 \label{eq:utility_constraint_j} \\
\quad & 0 \leq \gamma_j \leq \alpha_j \leq 1 \label{eq:bounds_j}\\
\quad & 1 - 2\alpha_j + \gamma_j \geq 0, \label{eq:additional_constraint_j}
\end{align}
where $\mathbb{E}[u] = \mathbb{E}[u_g]+\mathbb{E}[u_b]$. Each subproblem's feasible region is a convex polygon in the \((\alpha_j, \gamma_j)\) plane. Therefore, the optimal solution lies at a convex combination of the vertices. The potential vertices are \text{vertex A}: \((\alpha_j, \gamma_j) = (1, 1)\), \text{vertex B}: \((\alpha_j, \gamma_j) = \left(\frac{1}{2}, 0\right)\), and \text{vertex C}: \((\alpha_j, \gamma_j) = (0, 0)\), which has been plotted in Figure \ref{fig:feasible-region}. The optimal solution corresponds to the vertex with the highest value of \(\mathbb{E}[u_j]\). We compare the expected utilities $\mathbb{E}[u_j]$ evaluated at the points $A$, $B$, and $C$ to determine which vertex is optimal under different conditions. Vertex A is optimal if $\mathbb{E}[u_j(A)] > \mathbb{E}[u_j(B)]$ and $\mathbb{E}[u_j(A)] > \mathbb{E}[u_j(C)] $.
Substituting the expressions, we obtain $B_j > -\frac{A_j}{2}$ for vertex A to be optimal. %
Similarly, vertex B is optimal if $\mathbb{E}[u_j(B)] >\mathbb{E}[u_j(A)]$ and $\mathbb{E}[u_j(B)] > \mathbb{E}[u_j(C)]$
which imply that 

$ -\frac{A_j}{2} > B_j$. Finally, vertex C is optimal if $\mathbb{E}[u_j(C)] > \mathbb{E}[u_j(A)]$ and $\mathbb{E}[u_j(C)] > \mathbb{E}[u_j(B)]$.

Substituting the expressions, we obtain $A_j + B_j < 0$ and $\frac{A_j}{2}<0$ which is not possible as $A_j$ was defined to be positive. Hence, vertex C cannot be optimal. To cover all possible scenarios, we analyze boundary conditions where equalities hold. A convex combination of vertex A and vertex B is optimal if $\mathbb{E}[u_j(A)] = \mathbb{E}[u_j(B)]$ which implies that $B_{j} = -\frac{A_{j}}{2}$. In this case, any point in between vertices A and B is optimal. For this special case, we assume that both players prefer mediator preferred response which is $(\alpha_j^*, \!\gamma_j^*) \!=\! (1, \!1)$.
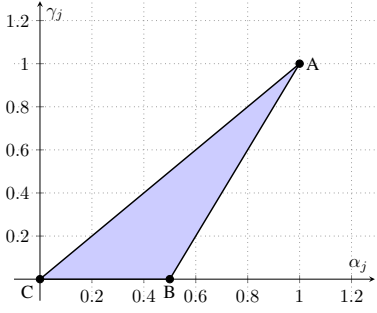
\begin{figure}[t]
    \centering
    \begin{tikzpicture}[scale=0.7]
        \begin{axis}[
            axis lines = middle,
            xlabel = {$\alpha_j$},
            ylabel = {$\gamma_j$},
            xmin = -0.1, xmax = 1.29,
            ymin = -0.1, ymax = 1.29,
            ticks = both,
            grid = major,
            grid style = {dotted, gray}
        ]
            \addplot[fill=blue!20, draw=none] coordinates {
                (0,0) (0.5,0) (1,1)
            } -- cycle;
    
            \addplot[domain=0:1, thick, samples=100] {x}; 
            \addplot[domain=0.5:1, thick, samples=100] {-1 + 2*x}; 
            \addplot[domain=0:0.5, thick, samples=100] {0}; 
            
            \addplot[only marks, mark=*] coordinates {(1,1)} node[right] {A};
            \addplot[only marks, mark=*] coordinates {(0.5,0)} node[below] {B};
            \addplot[only marks, mark=*] coordinates {(0,0)} node[below left] {C};
        \end{axis}
    \end{tikzpicture}
    \vspace{-0.2cm}
    \caption{Feasible region defined by vertices A, B, and C along with their convex combinations and constraints $0 \leq \gamma_j \leq \alpha_j$ and $1 - 2\alpha_j + \gamma_j \geq 0$, for the cases of $B_j > 0$ and $-B_j < A_j$.}
    \label{fig:feasible-region}
    \vspace{-0.4cm}
\end{figure} 

By combining all the cases above, we derive the closed-form solutions for the pair $(\alpha_j, \!\gamma_j)$ as follows:
\begin{equation}
    (\alpha_j^*, \gamma_j^*) =
    \begin{cases}
        (1, 1) & \text{if } B_{j} \geq -\frac{A_{j}}{2}, \\
        \left(\frac{1}{2}, 0\right) & \text{if } B_{j} < -\frac{A_{j}}{2}. 
    \end{cases}
    \label{eq:closed_form_solution_final}
\end{equation}

Here, we note that the players will always take action $(I,I)$ (corresponding to $(\alpha_j^*, \gamma_j^*) =(1,1)$ in (\ref{eq:closed_form_solution_final})) if $B_{j} \geq -\frac{A_{j}}{2}$ or take randomized actions $(I,N)$ or $(N,I)$ with probability $\frac{1}{2}$ if $B_{j} < -\frac{A_{j}}{2}$. As $A_{j}$ and $ B_{j}$ both depends on the mediator's policy $(\alpha, \beta)$, now we turn our attention to the mediator's optimization problem. 

In this case, we define the expected utility function of the mediator as: 
\begin{align}
\!\! \mathbb{E}[u_m] \!\!=\! \alpha_g  \Big(\!\psi\alpha\!\!+\!\!(1\!\!-\!\!\psi)\beta\!\Big)\!\!+\! \! \alpha_b \Big(\!\psi(1\!\!-\!\!\alpha)\!\!+\!\!(1\!-\!\psi)(1\!-\!\beta)\!\Big).\!\!  \label{eq:objective_m} 
\end{align}
Then, we solve the mediator's optimization problem by analyzing the following four cases: 

\begin{enumerate}
    \item \text{\!\!Case 1}: $\!(\alpha_g^*,\! \gamma_g^*) \!=\! \left(\frac{1}{2}, 0\right)$ and $\!(\alpha_b^*, \gamma_b^*) \! =\!$ $ \left(\frac{1}{2}, 0\right)$. Then, the utility of mediator becomes $\!\mathbb{E}[u_m]\! =\! 0.5$.
    
    \item \text{\!\!Case 2}:  $(\alpha_g^*, \gamma_g^*) \!\!=\!\!\left(\frac{1}{2}, 0\right)$ and $(\alpha_b^*, \gamma_b^*)\! =\! (1,1)$. Then, the expected utility becomes:
    \begin{align*}
         \!\!\!\!\!\!\!\!\mathbb{E}[u_m] \!\!= \!\!0.5 \Big(\psi\alpha\!+\!(1\!-\!\psi)\beta\Big)\!+\!  \Big(\psi(1\!-\!\alpha)\!+\!(1\!-\!\psi)(1\!-\!\beta)\Big)
    \end{align*}
    \normalsize
    \item \text{\!\!Case 3}: $\!(\alpha_g^*, \gamma_g^*) \!=\! (1,1)$ and $(\alpha_b^*, \gamma_b^*)$ $=\left( \frac{1}{2}, 0\right)$, respectively. Then, the expected utility becomes:
    \begin{align*}
         \!\!\!\!\!\!\!\!\mathbb{E}[u_m] \!\!= \!\! \Big(\psi\alpha\!+\!(1\!-\!\psi)\beta\Big)\!+\! 0.5 \Big(\psi(1\!-\!\alpha)\!+\!(1\!-\!\psi)(1\!-\!\beta)\Big)
    \end{align*}
    \normalsize
    \item \text{\!\!Case 4}: $(\alpha_g^*, \!\gamma_g^*)\! =\! (1,1)$ and $(\alpha_b^*, \gamma_b^*) \!=$ $ (1,1)$. Then, the utility of the mediator becomes $ \mathbb{E}[u_m] = 1$. 
\end{enumerate}

 Then, we can formally state the Stackelberg Equilibrium (SE) for the steering oriented mediator in the following theorem.
\begin{theorem}
The SE of the game is characterized by the mediator’s policy \((\alpha, \beta)\) and the players’ strategies \((\alpha_G, \gamma_G, \alpha_B, \gamma_B)\) for any \(0 \leq \eta \leq 1\) as follows:
\begin{align}
     (\alpha,& \beta, \alpha_G, \gamma_G, \alpha_B, \gamma_B ) =\!\nonumber \\& \!\!\!\!\!\!\!\!\!\!\begin{cases} 
      \biggl\{ \left(1, \frac{\psi(y_G+\frac{z}{2})}{(1-\psi)(-y_B-\frac{z}{2})}, 1, 1, 0.5, 0 \right), \\
     \left(\!0, 1 \!-\! \frac{\psi(y_G+\frac{z}{2})}{(1-\psi)(-y_B\!-\!\frac{z}{2})}, 0.5, 0, 1, 1 \!\right)\!\!\biggl\},& \!\!\!y_B\!<\! - \frac{\psi y_G+\frac{z}{2}}{1-\psi},\\
      (\eta, \eta, 1, 1, 1, 1),& \!\!\! y_B\!\geq\! - \frac{\psi y_G+\frac{z}{2}}{1-\psi}, 
   \end{cases}
\end{align}
for any \( 0 \! \leq \!\eta \!\leq \!1 \).

\end{theorem}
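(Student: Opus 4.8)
The plan is to eliminate the lower level by substituting the closed-form follower response \eqref{eq:closed_form_solution_final}, reducing the bilevel problem to the mediator's choice of $(\alpha,\beta)$ together with the signal-wise switching conditions $B_j \ge -\tfrac{A_j}{2}$. First I would expand these conditions using the definitions of $A_j,B_j$, obtaining for each signal the affine expressions $B_g + \tfrac{A_g}{2} = \psi\alpha(y_G+\tfrac{z}{2}) + (1-\psi)\beta(y_B+\tfrac{z}{2})$ and the complementary $B_b+\tfrac{A_b}{2}=\psi(1-\alpha)(y_G+\tfrac{z}{2})+(1-\psi)(1-\beta)(y_B+\tfrac{z}{2})$. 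The crucial observation I would isolate next is that their sum equals $\psi y_G + (1-\psi)y_B + \tfrac{z}{2}$, which is \emph{independent} of $(\alpha,\beta)$; this invariant is exactly what splits the two regimes in the theorem, since $y_B \ge -\tfrac{\psi y_G + z/2}{1-\psi}$ is equivalent to the invariant being nonnegative.

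In the regime where the invariant is nonnegative, I would argue that both switching conditions can be made simultaneously nonnegative: choosing $\alpha=\beta=\eta$ collapses the two expressions into $\eta$ and $(1-\eta)$ times the (nonnegative) invariant. Hence the followers play $(1,1)$ for both signals (Case 4), yielding $\mathbb{E}[u_m]=1$. Since the weights $\psi\alpha+(1-\psi)\beta$ and $\psi(1-\alpha)+(1-\psi)(1-\beta)$ sum to one and $\alpha_g,\alpha_b\le 1$, the objective \eqref{eq:objective_m} is bounded above by $1$, so this is globally optimal and holds for \emph{every} $\eta\in[0,1]$, matching the stated family of equilibria.

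In the complementary regime the invariant is negative, so the two switching conditions cannot both hold (their sum would be nonnegative), making Case 4 infeasible; the optimum must therefore lie among Cases 1--3. I would discard Case 1 immediately ($\mathbb{E}[u_m]=0.5$) and treat Cases 2 and 3 as linear programs in $(\alpha,\beta)$: in Case 3 the objective is $0.5+0.5\,w_1$ with $w_1=\psi\alpha+(1-\psi)\beta$, to be maximized subject to $g$ playing $(1,1)$ and $b$ playing $(\tfrac12,0)$, while Case 2 is the mirror image minimizing $w_1$ in the objective $1-0.5\,w_1$. Solving the Case 3 program—pushing $\alpha$ to its upper bound $1$ and then raising $\beta$ until the $g$-condition holds with equality—yields $\beta^*=\tfrac{\psi(y_G+z/2)}{(1-\psi)(-y_B-z/2)}$, and the mirrored computation gives the second listed vertex.

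The main obstacle, and the step I would spend the most care on, is verifying that these boundary vertices are genuinely the constrained optima (rather than interior or box-corner points) and that the resulting probabilities are admissible, i.e.\ $\beta^*\in(0,1)$ and the $b$-condition stays strictly negative at $\alpha=1$; I expect both facts to follow precisely from the negativity of the invariant, which also forces $y_B+\tfrac{z}{2}<0$ so that the denominators are positive. A short final computation then shows both vertices deliver the identical mediator payoff $0.5+0.5\psi+0.5\,\tfrac{\psi(y_G+z/2)}{-y_B-z/2}>0.5$, which simultaneously dominates Case 1 and explains why the equilibrium set in this regime contains exactly these two points.
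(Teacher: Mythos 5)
Your proposal is correct, and its skeleton---substituting the follower best response \eqref{eq:closed_form_solution_final}, enumerating the four response cases, optimizing $(\alpha,\beta)$ within each, and comparing mediator payoffs---is the same as the paper's. Where you genuinely depart is in how the regime threshold is established. The paper derives the condition $y_B \geq -\tfrac{\psi y_G + z/2}{1-\psi}$ by writing the Case-4 requirement as $\tfrac{(1-\psi)(-y_B-z/2)}{\psi(y_G+z/2)} \leq \min\bigl\{\tfrac{\alpha}{\beta},\tfrac{1-\alpha}{1-\beta}\bigr\}$ and asserting that $\alpha=\beta$ gives the ``loosest bound,'' which proves feasibility of the diagonal family but leaves the converse (no policy works below the threshold) implicit; your summed-invariant identity $(B_g+\tfrac{A_g}{2})+(B_b+\tfrac{A_b}{2})=\psi y_G+(1-\psi)y_B+\tfrac{z}{2}$, independent of $(\alpha,\beta)$, proves both directions at once: nonnegativity makes $\alpha=\beta=\eta$ split the invariant into two nonnegative pieces, and negativity makes simultaneous satisfaction of both switching conditions impossible, so Case 4 is infeasible for \emph{every} policy. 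You also supply details the paper only asserts: that negativity of the invariant forces $y_B+\tfrac{z}{2}<0$ (so the ratio $r=\tfrac{(1-\psi)(-y_B-z/2)}{\psi(y_G+z/2)}$ exceeds $1$ and $\beta^*=1/r\in(0,1)$ is admissible), the explicit common payoff $0.5+0.5\psi+0.5\,\tfrac{\psi(y_G+z/2)}{-y_B-z/2}>0.5$ at the two vertices, and the direct domination of Case 1 by this value (the paper instead argues Case 1's feasible interval coincides with those of Cases 2--3). One small caveat applies equally to your argument and the paper's: in the regime $\psi y_G+(1-\psi)y_B+\tfrac{z}{2}\geq 0$ you show the diagonal family attains the global maximum $\mathbb{E}[u_m]=1$, but neither proof shows these are the \emph{only} optimal policies (e.g., if $y_B+\tfrac{z}{2}\geq 0$ every $(\alpha,\beta)$ induces Case 4), so the stated characterization should be read as exhibiting a family of equilibria rather than an exhaustive list there; in the complementary regime your LP argument does yield uniqueness of the two vertices.
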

\begin{proof}
    It can be easily seen that the utility maximizer case for the mediator is Case 4. For this case to be optimal, we need $B_j \geq -\frac{A_j}{2}$ for $j\in\{g,b\}$ which imply that: 
    \begin{align*}
        \frac{(1-\psi)(-y_B-\frac{z}{2})}{\psi(y_G+\frac{z}{2})} \leq \min\Bigl\{\frac{\alpha}{\beta}, \frac{1-\alpha}{1-\beta} \Bigl\}.
    \end{align*}

    Choosing $\alpha=\beta$ on the right side  provides the loosest bounds on $y_B$ which is given by $ y_B\geq -\frac{\psi y_G+\frac{z}{2}}{1-\psi} $. Therefore, the SE happens to be \( 
    (\alpha, \beta, \alpha_G, \gamma_G, \alpha_B, \gamma_B ) \!=\! (\eta, \eta, 1, 1, 1, 1),
    \)
    for any \( 0 \! \leq \!\eta \!\leq \!1 \) when $ y_B\geq -\frac{\psi y_G+\frac{z}{2}}{1-\psi} $. 

     When Case 4 is not feasible, the mediator's next best options are Cases 2 and 3. For Case 2, due to mediator's objective, the optimal signaling probabilities are obtained with the lowest possible values for \((\alpha,\beta) \). Also, for this case to be optimal, we need $ B_g < -\frac{A_g}{2}$ and $B_b \geq -\frac{A_b}{2} $ which imply that:
    \begin{align}\label{eqn:Case2}
        \frac{\alpha}{\beta} < \frac{(1-\psi)(-y_B-\frac{z}{2})}{\psi(y_G+\frac{z}{2})} \leq \frac{1-\alpha}{1-\beta}.
    \end{align}

    Due to the mediator's objective function $\mathbb{E}[u_m]$, the utility of the mediator is maximized when
    \((\alpha,\beta) = \big(0, (1- \frac{\psi(y_G+\frac{z}{2})}{(1-\psi)(-y_B-\frac{z}{2})})\big)\), with the condition of $ y_B< -\frac{\psi y_G+\frac{z}{2}}{1-\psi} $. 
    For Case 3, the optimal signaling probabilities are obtained with the largest possible values of \((\alpha,\beta) \). Also, for Case 3 to be optimal, we need $B_g \geq -\frac{A_g}{2}$ and $B_b < -\frac{A_b}{2}$, implying that:
    \begin{align*}
       \frac{1-\alpha}{1-\beta} < \frac{(1-\psi)(-y_B-\frac{z}{2})}{\psi(y_G+\frac{z}{2})}\leq  \frac{\alpha}{\beta}. 
    \end{align*}

    In Case 3, the utility of the mediator is maximized when \((\alpha,\beta) = \big(1, \frac{\psi(y_G+\frac{z}{2})}{(1-\psi)(-y_B-\frac{z}{2})}\big)\), with the condition of $ y_B< -\frac{\psi y_G+\frac{z}{2}}{1-\psi} $. Since the mediator's utility is the same in both cases, we conclude that when $ y_B< -\frac{\psi y_G+\frac{z}{2}}{1-\psi} $ holds, there exist two Stackelberg Equilibria given by $ 
    (\alpha, \beta, \alpha_G, \gamma_G, \alpha_B, \gamma_B ) \in \Bigl\{ \Bigl(0,$ $ 1 - \frac{\psi(y_G+\frac{z}{2})}{(1-\psi)(-y_B-\frac{z}{2})}, 0.5, 0, 1, 1 \Bigl), \left(1, \frac{\psi(y_G+\frac{z}{2})}{(1-\psi)(-y_B-\frac{z}{2})}, 1, 1, 0.5, 0 \right)\Bigl\}$.    

    Finally, for Case 1, we need \( B_j < -\frac{A_j}{2} \) for \( j \in \{g,b\} \), which implies that:  
    \[
    \max\Bigl\{\frac{\alpha}{\beta}, \frac{1-\alpha}{1-\beta} \Bigr\} < \frac{(1-\psi)(-y_B - \frac{z}{2})}{\psi(y_G + \frac{z}{2})}.
    \]  
  Thus, choosing the loosest bound, we observe that it coincides with the intervals of Cases 2 and 3. Since, the mediator is better off in these cases, no SE arise from Case 1.
\end{proof}

Now, since the Stackelberg Equilibria are known, the mediator can guide players toward the one that is best for her. Thus, the general structure of the information-aided steering framework is outlined in Algorithm \ref{alg:algo1}. Given this structure, to bound the directness gap convergence rate, the Bayesian games can be reduced to a special adversarial bandit problem. In this setting, we specifically focus on the EXP3.P algorithm as it provides an ex-post regret bound with high probability rather than an expected regret bound, as ex-post regret is required for calculating the directness-gap convergence bound. Now, we first present the following lemma to streamline the analysis of high-probability regret bound for the EXP3.P class under non-uniform probability initialization. 

\begin{algorithm}[t] 
    \caption{Soft Inducement-Aided Steering Algorithm}
    \begin{algorithmic}[1]
        \renewcommand{\algorithmicrequire}{\textbf{Input:}}
        \renewcommand{\algorithmicensure}{\textbf{Output:}}
        
        \STATE \textbf{Input:} Initial game parameters $z$, $y_G$, $y_B$, $\psi$.
        \STATE  Mediator observes utility functions and inputs. Computes the Stackelberg equilibria. Then, commits to a signaling mechanism $\pi(\cdot|\theta)$ that leads to the best Stackelberg equilibrium for herself.
        \STATE Players observe the $\pi(\cdot|\theta)$, optimize their own utility functions, and arrive at the Stackelberg equilibrium.
        \FOR{each time step \( t = 1,2,\dots T\)}
            \STATE  Nature selects the state of the world \( \theta_t \), and mediator commits to a $\nu_i$.
            \STATE Mediator samples a public signal \( s_t \sim \pi(\cdot | \theta_t) \).
            \STATE Each player \( i \) observes \( s_t \) and selects action \( a_i^t \sim \sigma_{i,t}(\cdot | s_t, h_t) \).
            \STATE Players receive rewards based on joint actions and the realized state, \( v^{(t)}_i\), and update their decision strategies \( \sigma_{i,t}(\cdot | s_{t+1}, h_{t+1}) \) using no-regret learning algorithms.
        \ENDFOR
    \end{algorithmic}
    \label{alg:algo1}
\end{algorithm}

\begin{lemma}
\label{lem:estimator}
Fix $\beta\in(0,1]$. Then, for any $\delta\in(0,1)$, with probability of at least $1-\delta$,
\[
\sum_{t=1}^T g_{i,t}\;\le\;\sum_{t=1}^T \hat{g}_{i,t}\;+\;\frac{\ln(\delta^{-1})}{\beta}.
\]
\end{lemma}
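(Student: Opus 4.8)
The plan is to prove this one-sided high-probability deviation bound by an exponential (Chernoff-type) supermartingale argument, the standard route for controlling the biased importance-weighted estimators of \textsc{EXP3.P}. Let $\mathcal{F}_{t-1}$ denote the $\sigma$-field generated by the first $t-1$ rounds, so that $p_t(i)$ and the adversarial gain $g_{i,t}\in[0,1]$ are $\mathcal{F}_{t-1}$-measurable while the only fresh randomness at round $t$ is the draw $I_t\sim p_t$. I would introduce the exponential process
\[
M_t \;=\; \exp\!\Big(\beta\sum_{s=1}^{t}\big(g_{i,s}-\hat{g}_{i,s}\big)\Big),
\qquad M_0=1,
\]
and show it is a supermartingale, i.e. $\mathbb{E}[M_t\mid\mathcal{F}_{t-1}]\le M_{t-1}$. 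Since the increment factor $\exp(\beta(g_{i,t}-\hat{g}_{i,t}))$ is the only $\mathcal{F}_{t-1}$-nonmeasurable part, this reduces to the single-round claim $\mathbb{E}\!\big[\exp(\beta(g_{i,t}-\hat{g}_{i,t}))\mid\mathcal{F}_{t-1}\big]\le 1$.

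The main obstacle, and the crux of the argument, is establishing this one-step inequality. I would condition on the Bernoulli event $\{I_t=i\}$, which has probability $p_t(i)$. Writing $c=p_t(i)$ and $g=g_{i,t}$, the estimator equals $(g+\beta)/c$ on $\{I_t=i\}$ and $\beta/c$ otherwise, so the conditional expectation factors as
\[
\exp\!\Big(\beta g-\tfrac{\beta^2}{c}\Big)\Big[\,c\,\exp\!\big(-\tfrac{\beta g}{c}\big)+(1-c)\,\Big].
\]
To bound the bracketed term I would apply $e^{-x}\le 1-x+\tfrac{x^2}{2}$ (valid for $x\ge 0$) with $x=\beta g/c$, giving $c\,e^{-\beta g/c}+(1-c)\le 1-\beta g+\tfrac{\beta^2 g^2}{2c}$, and then $1+u\le e^{u}$ to re-exponentiate. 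The $\beta g$ contributions cancel, leaving $\exp\!\big(\tfrac{\beta^2}{c}(\tfrac{g^2}{2}-1)\big)$. Because $g\in[0,1]$ forces $g^2/2\le \tfrac12<1$, the exponent is nonpositive and the factor is at most $1$, which is precisely the supermartingale step. This is the only place where the $[0,1]$ range of the gains and the specific additive bias $\beta$ are used.

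With $\mathbb{E}[M_T]\le \mathbb{E}[M_0]=1$ established, I would conclude by Markov's inequality applied at the threshold $u=\ln(\delta^{-1})/\beta$:
\[
\mathbb{P}\!\Big(\textstyle\sum_{t=1}^T\big(g_{i,t}-\hat{g}_{i,t}\big)>u\Big)
=\mathbb{P}\big(M_T>\delta^{-1}\big)
\le \delta\,\mathbb{E}[M_T]
\le \delta,
\]
so with probability at least $1-\delta$ we obtain $\sum_t g_{i,t}\le\sum_t\hat{g}_{i,t}+\ln(\delta^{-1})/\beta$, as claimed. The technical care is concentrated in the second paragraph; the supermartingale construction and the final Markov step are routine.
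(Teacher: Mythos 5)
Your proof is correct, but it is worth noting that the paper does not actually prove this lemma at all: its ``proof'' is a one-line deferral to \cite[Lemma 3.1]{bubeck2012regret}, so you have supplied the self-contained argument that the paper outsources. Your route is essentially the standard one behind that cited lemma: an exponential supermartingale $M_t=\exp\bigl(\beta\sum_{s\le t}(g_{i,s}-\hat g_{i,s})\bigr)$, a one-step bound $\mathbb{E}[\exp(\beta(g_{i,t}-\hat g_{i,t}))\mid\mathcal{F}_{t-1}]\le 1$, and Markov at threshold $\delta^{-1}$. I verified the crux computation: conditioning on $\{I_t=i\}$ with $c=p_t(i)$ indeed gives the factorization $\exp(\beta g-\beta^2/c)\bigl[c\,e^{-\beta g/c}+(1-c)\bigr]$, and your chain $c\,e^{-\beta g/c}+(1-c)\le 1-\beta g+\beta^2g^2/(2c)\le \exp(-\beta g+\beta^2g^2/(2c))$ yields the total bound $\exp\bigl(\tfrac{\beta^2}{c}(\tfrac{g^2}{2}-1)\bigr)\le 1$ since $g\in[0,1]$. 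One difference from the Bubeck--Cesa-Bianchi proof is actually a small improvement: they bound the moment generating function via $e^x\le 1+x+x^2$ for $x\le 1$, which is where the hypothesis $\beta\le 1$ enters, whereas your inequality $e^{-x}\le 1-x+x^2/2$ is valid for all $x\ge 0$ (as is $1+u\le e^u$), so your argument never uses $\beta\le 1$ and proves the lemma for every $\beta>0$ --- a slight strengthening, not an error. Two minor points of hygiene: the supermartingale step implicitly requires $g_{i,t}$ and $p_t(i)$ to be $\mathcal{F}_{t-1}$-measurable, which you correctly flag and which holds here (the adversarial gains are set before the draw $I_t$, and $p_t$ is determined by the history); and in the final Markov step the bound is $\mathbb{P}(M_T>\delta^{-1})\le \delta\,\mathbb{E}[M_T]\le\delta$, exactly as you wrote.
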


\begin{proof}
The proof of Lemma~\ref{lem:estimator} follows directly from \cite[Lemma 3.1]{bubeck2012regret}.
\end{proof}

Consequently, we provide a high-probability regret bound for the EXP3.P class under non-uniform probability initialization.  

\begin{theorem}\label{thm:main}
Suppose that the parameters satisfy
\( 
\gamma \le \frac{1}{2},\) \(  (1+\beta)\,K\,\eta \le \gamma.
\)
Then, for any $\delta\in (0,1)$, with probability of at least $1-\delta$, the regret $R_T$ of Exp3.P, Algorithm \ref{alg:exp3p}, with initial weights $w_{i,1}=\pi_i$ is bounded by
\begin{align}\label{eq:hp-bound}
R_T \le \beta\,TK + \gamma\,T + (1+\beta)\,\eta\,K\,T - \frac{\ln(\pi^*)}{\eta} + \frac{\ln(K/\delta)}{\beta}
\end{align}
Choosing appropriate $\!\beta, \gamma, \eta$ one obtains the regret bound of: 
\begin{align}
R_T = \tilde{\mathcal{O}}\Bigl(\sqrt{TK}\big(4\sqrt{\ln(1/\pi^*)} +2\sqrt{\ln(K/\delta)}\big)\Bigr),
\end{align}
with probability of at least $1-\delta$, where \( \pi^* \) is the initial probability of choosing the best hindsight action.
\end{theorem}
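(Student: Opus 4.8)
The plan is to adapt the standard potential-function analysis of \textsc{EXP3.P} to the non-uniform initialization $w_{i,1}=\pi_i$, isolating the single place where the prior enters. I would define the potential $W_t=\sum_{i=1}^K w_{i,t}$ with $w_{i,t}=\pi_i\exp(\eta\widehat G_{i,t-1})$, normalized so that $W_1=\sum_i \pi_i=1$, and track the log-increments $\ln(W_{t+1}/W_t)$. The lower bound is immediate: since $W_{T+1}\ge w_{i^*,T+1}=\pi_{i^*}\exp(\eta\widehat G_{i^*,T})$ for the best hindsight arm $i^*$, taking logarithms gives $\ln W_{T+1}\ge \eta\widehat G_{i^*,T}+\ln\pi^*$ with $\pi^*=\pi_{i^*}$. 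This is precisely where the improvement originates: the textbook uniform case produces $-\ln K$ in place of $\ln\pi^*$, and this substitution is what later yields the $-\ln(\pi^*)/\eta$ term in \eqref{eq:hp-bound} rather than the usual $\ln(K)/\eta$.

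Next I would establish the per-round upper bound on $\ln(W_{t+1}/W_t)$ using the mixing rule $p_{t+1}(i)=(1-\gamma)\,w_{i,t+1}/W_{t+1}+\gamma/K$ and the importance-weighted estimator $\widehat g_{i,t}$ from Line 5 of Algorithm~\ref{alg:exp3p}. The hypothesis $(1+\beta)K\eta\le\gamma$ is used here to guarantee $\eta\,\widehat g_{i,t}\le 1$: the exploration floor forces $p_t(i)\ge\gamma/K$, hence $\widehat g_{i,t}\le (1+\beta)K/\gamma$, which licenses the second-order inequality $e^x\le 1+x+x^2$ for $x\le 1$ applied to each $\eta\,\widehat g_{i,t}$. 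Expanding the logarithm, using $\gamma\le\tfrac12$ to control the mixing correction, and summing over $t$ produces a telescoping estimate that generates exactly the bias contribution $\beta TK$, the exploration contribution $\gamma T$, and the second-order contribution $(1+\beta)\eta KT$.

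I would then combine the lower and upper bounds on $\ln W_{T+1}$ to relate $\widehat G_{i^*,T}$ to the algorithm's realized cumulative gain, and invoke Lemma~\ref{lem:estimator} to pass from estimated gains $\widehat g_{i,t}$ back to true gains $g_{i,t}$. Because the benchmark arm $i^*$ is data-dependent, I would apply Lemma~\ref{lem:estimator} simultaneously to all $K$ arms through a union bound, which replaces $\ln(1/\delta)/\beta$ by $\ln(K/\delta)/\beta$ and supplies the final term of \eqref{eq:hp-bound}; assembling the pieces gives the stated high-probability bound. For the $\tilde{\mathcal O}$ rate I would optimize the free parameters: taking $\beta=\sqrt{\ln(K/\delta)/(TK)}$ balances $\beta TK$ against $\ln(K/\delta)/\beta$ to yield $2\sqrt{TK\ln(K/\delta)}$, and taking $\eta$ of order $\sqrt{\ln(1/\pi^*)/(TK)}$ with $\gamma$ set to its smallest admissible value $(1+\beta)K\eta$ collapses the remaining $T$-dependent terms against $-\ln(\pi^*)/\eta$ into the $\sqrt{TK}\,\ln(1/\pi^*)$ contribution, producing the claimed leading factor $\sqrt{TK}\big(4\sqrt{\ln(1/\pi^*)}+2\sqrt{\ln(K/\delta)}\big)$.

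I expect the main obstacle to be not any single deep estimate but the careful bookkeeping that keeps everything mutually consistent: verifying the second-order expansion remains valid under the joint constraints $\gamma\le\tfrac12$ and $(1+\beta)K\eta\le\gamma$, and in particular the uniform-over-arms high-probability control needed to handle the random benchmark $i^*$. The union bound is what converts the single-arm statement of Lemma~\ref{lem:estimator} into a guarantee valid at the realized best action, and getting the $\delta$-accounting right while simultaneously respecting the parameter constraints (so that the final parameter choices are admissible) is the delicate part of the argument.
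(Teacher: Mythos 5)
Your proposal is correct and takes essentially the same route as the paper's own proof: the same potential-function telescoping with $W_1=\sum_i\pi_i=1$, the prior entering solely through the lower bound $W_{T+1}\ge\pi^\star e^{\eta\widehat{G}_{i^\star,T}}$ (replacing the usual $-\ln K$ by $\ln\pi^\star$), the second-order inequality $e^x\le 1+x+x^2$ licensed by $p_t(i)\ge\gamma/K$ and $(1+\beta)K\eta\le\gamma$, and Lemma~\ref{lem:estimator} applied with a union bound over all $K$ arms to handle the data-dependent benchmark, yielding the $\ln(K/\delta)/\beta$ term. Your parameter choices $\beta=\sqrt{\ln(K/\delta)/(KT)}$, $\eta=\sqrt{\ln(1/\pi^\star)/(KT)}$, $\gamma=(1+\beta)K\eta$ coincide exactly with the paper's, so the argument assembles into the stated bound with no gaps.
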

\begin{proof}
First, from Algorithm \ref{alg:exp3p} it can be seen that, $p_{i,t}\ge \gamma/K$. Hence, we have
\begin{equation}
\label{eq:eta-hg-le-1}
\eta\,\widehat{g}_{i,t}\;\le\;\eta\frac{1+\beta}{p_{i,t}}
\;\le\; \frac{(1+\beta)\eta K}{\gamma} \;\le\; 1,
\end{equation}
which will be used with $e^x\le 1+x+x^2$ for $x\le 1$. Fix $k\in[K]$. Since
\(
\mathbb{E}_{i\sim p_t}\,[\widehat{g}_{i,t}] = g_{I_t,t} + \beta K,
\)
summing over $t$ yields
\begin{equation}
\label{eq:decomp}
\sum_{t=1}^T g_{k,t} - \sum_{t=1}^T g_{I_t,t}
= \beta K T \;+\; \sum_{t=1}^T g_{k,t} \;-\; \sum_{t=1}^T \mathbb{E}_{i\sim p_t}\,[\widehat{g}_{i,t}].
\end{equation}
Write $p_t=(1-\gamma)q_t+\gamma u$, where $u$ is uniform on $[K]$.
Using the exact identity
\(
-\mathbb{E}_q [X] = \frac{1}{\eta}\ln \mathbb{E}_q [e^{\eta(X-\mathbb{E}_q [X])}] - \frac{1}{\eta}\ln \mathbb{E}_q [e^{\eta X}],
\)
we have
\begin{align}
\!-\mathbb{E}_{i\sim p_t}\,[\widehat{g}_{i,t}]
\!\!&= -(1-\gamma)\,\mathbb{E}_{q_t}[\widehat{g}_{i,t} ]\;-\; \gamma\, \mathbb{E}_u [\widehat{g}_{i,t}] \nonumber\\
&= \!\!(1\!\!-\!\!\gamma)\!\Biggl[\!\frac{1}{\eta}\ln \mathbb{E}_{q_t}\! [e^{\eta(\widehat{g}_{i,t} -\mathbb{E}_{q_t}[\widehat{g}_{i,t})]}]
\!\!- \!\!\frac{1}{\eta}\!\ln \mathbb{E}_{q_t}\! [e^{\eta \widehat{g}_{i,t}}]\!\Biggr] \notag
\\ & \quad\;-\; \gamma\, \mathbb{E}_u[ \widehat{g}_{i,t}].
\label{eq:identity}
\end{align}
By \eqref{eq:eta-hg-le-1}, $x=\eta(\widehat{g}_{i,t}-\mathbb{E}_{q_t}\widehat{g}_{i,t})\!\le \!1$ almost surely, and thus we obtain
\begin{align*}
\!\ln \mathbb{E}_{q_t} [e^{\eta(\widehat{g}_{i,t}-\mathbb{E}_{q_t}[\widehat{g}_{i,t}])}]
&\!\!\;\le\; \!\!\!\mathbb{E}_{q_t}\!\big[\!e^{\eta(\widehat{g}_{i,t}-\mathbb{E}_{q_t}[\widehat{g}_{i,t}])}\!\!-\!\!1\!\!-\!\eta(\widehat{g}_{i,t}\!\!-\!\!\mathbb{E}_{q_t}[\widehat{g}_{i,t}])\!\big]\\&
\!\le \eta^2\,\mathbb{E}_{q_t}\!\big[(\widehat{g}_{i,t}-\mathbb{E}_{q_t}\widehat{g}_{i,t})^2\big]
\\&\!\le \eta^2\,\mathbb{E}_{q_t}\!\big[\widehat{g}_{i,t}^2\big].
\end{align*}
Also, $q_{i,t}\le p_{i,t}/(1-\gamma)$ and $\widehat{g}_{i,t}\le (1+\beta)/p_{i,t}$ imply
\begin{align*}
\mathbb{E}_{q_t}\!\big[\widehat{g}_{i,t}^2\big]
\!=\!\sum_{i=1}^K \!q_{i,t}\,\widehat{g}_{i,t}^2
 \le\! \frac{1}{1\!-\!\gamma}\!\sum_{i=1}^K p_{i,t}\,\widehat{g}_{i,t}\frac{1\!+\!\beta}{p_{i,t}}
\!= \!\frac{1\!+\!\beta}{1\!-\!\gamma}\!\sum_{i=1}^K \widehat{g}_{i,t}.
\end{align*}
Dropping the nonpositive term $-\gamma\,\mathbb{E}_u \widehat{g}_{i,t}\le 0$ from \eqref{eq:identity}, we have
\begin{equation}
\label{eq:one-step}
-\mathbb{E}_{i\sim p_t}\,[\widehat{g}_{i,t}
]\;\le\; (1+\beta)\eta \sum_{i=1}^K \widehat{g}_{i,t}
\;-\; \frac{1-\gamma}{\eta}\,\ln \mathbb{E}_{q_t} [e^{\eta \widehat{g}_{i,t}}].
\end{equation}
Since $w_{i,t+1}=w_{i,t}e^{\eta \widehat{g}_{i,t}}$ and $q_{i,t}=w_{i,t}/\sum_j w_{j,t}$, we have
\[
\mathbb{E}_{q_t} [e^{\eta \widehat{g}_{i,t}}] = \sum_{i=1}^K q_{i,t}\,e^{\eta \widehat{g}_{i,t}}
= \frac{\sum_{i=1}^K w_{i,t+1}}{\sum_{i=1}^K w_{i,t}}.
\]
Summing \eqref{eq:one-step} over $t=1,\dots,T$ and using
$\sum_{t=1}^T\sum_{i=1}^K \widehat{g}_{i,t}$ $ = \sum_{i=1}^K \widehat{G}_{i,T} \le K \max_j \widehat{G}_{j,T}$, we get
\begin{align}
\!-\!\!\sum_{t=1}^T \!\mathbb{E}_{i\sim p_t}\,[\widehat{g}_{i,t}]
\!&\le \!(1\!+\!\beta)\eta K \max_j \widehat{G}_{j,T}
\!- \!\frac{1\!-\!\gamma}{\eta}\,\!\!\sum_{t=1}^T \!\ln \!\frac{\sum_i \!w_{i,t+1}}{\sum_i \!w_{i,t}} \nonumber\\
&= \!(1\!+\!\beta)\eta K \max_j \widehat{G}_{j,T}
\!-\! \frac{1\!-\!\gamma}{\eta}\ln \frac{\sum_i \!w_{i,T+1}}{\sum_i \!w_{i,1}} \nonumber\\
&= \!(1\!+\!\beta)\eta K \max_j \widehat{G}_{j,T}
\!-\! \frac{1\!-\!\gamma}{\eta}\!\ln\! \!\Big[\!\!\sum_{i=1}^K\! \pi_i\,e^{\eta \widehat{G}_{i,T}}\!\Big]
\label{eq:summed}
\end{align}
because $\sum_i\! w_{i,1}\!=\!\sum_i\! \pi_i\!=\!1$ and $w_{i,T+1}\!=\!\pi_i e^{\eta \widehat{G}_{i,T}}$. Let $i^\star\in$ $\!\arg\max_i \sum_{t=1}^T g_{i,t}$; then
\(
\sum_i \pi_i e^{\eta \widehat{G}_{i,T}}
\!\ge\! \pi^\star e^{\eta \widehat{G}_{i^\star,T}}
\).
Thus,
\begin{align}
\label{eq:summed2}
-\!\!\sum_{t=1}^T \! \mathbb{E}_{i\sim p_t}\,[\widehat{g}_{i,t}]
\!\le& (1\!+\!\beta)\eta K \max_j \widehat{G}_{j,T}
-  (1-\gamma)\widehat{G}_{i^\star,T}\notag \\ &- \frac{1\!-\!\gamma}{\eta}\ln(\pi^\star)
\end{align}
Using Lemma \ref{lem:estimator}, we have
\(
\max_j \widehat{G}_{j,T} \ge \max_j \sum_{t=1}^T g_{j,t} - \frac{\ln(K/\delta)}{\beta}
= \sum_{t=1}^T g_{i^\star,t} - \frac{\ln(K/\delta)}{\beta}
\).
Plugging this lower bound into \eqref{eq:summed2} gives
\begin{align*}
-\!\sum_{t=1}^T \!\mathbb{E}_{i\sim p_t}\,[\widehat{g}_{i,t}]
\!\le \!& \!-\!\Bigl[1\!-\!\gamma\!-\!(1\!+\!\beta)\eta K\Bigr]\!\sum_{t=1}^T g_{i^\star,t} \!- \!\frac{1\!-\!\gamma}{\eta}\ln(\pi^\star)
\\ & + \Bigl[1-\gamma-(1+\beta)\eta K\Bigr]\frac{\ln(K/\delta)}{\beta}.
\end{align*}
Combining with \eqref{eq:decomp} for $k=i^\star$ yields
\begin{align*}
R_T
\le\; &   \beta K T
\;+\;\Bigl(\gamma + (1+\beta)\eta K \Bigr)\sum_{t=1}^T g_{i^\star,t}
- \frac{1-\gamma}{\eta}\ln(\pi^\star)
\\& + \Bigl[1-\gamma-(1+\beta)\eta K\Bigr]\frac{\ln(K/\delta)}{\beta}.
\end{align*}
Using $\sum_{t=1}^T g_{i^\star,t}\le T$ and $1-\gamma-(1+\beta)\eta K\le 1$, we obtain
\[
R_T
\le \beta K T + \gamma T + (1+\beta)\eta K T
- \frac{1-\gamma}{\eta}\ln(\pi^\star)
+ \frac{\ln(K/\delta)}{\beta}.
\]
Finally, since $1-\gamma\le 1$ and $\ln(\pi^\star)\le 0$, loosening $-\frac{1-\gamma}{\eta}\ln(\pi^\star)\le -\frac{1}{\eta}\ln(\pi^\star)$ gives \eqref{eq:hp-bound}. Finally, we set \( 
\beta=\sqrt{\frac{\ln(K/\delta)}{K T}},
\eta=\sqrt{\frac{\ln(1/\pi^\star)}{K T}},
\gamma=(1+\beta)\,K\,\eta.
\) 
Plugging into \eqref{eq:hp-bound} and grouping terms gives
\begin{align*}
R_T
\!\le &\!
\underbrace{\sqrt{\!K T \ln(K/\delta)}}_{\beta K T}
\!+\!
\underbrace{(1\!+\!\beta)\sqrt{\!K T \ln(1/\pi^\star)}}_{\gamma T}\!+\!
\underbrace{\sqrt{\!K T \ln(1/\pi^\star)}}_{-\ln(\pi^\star)/\eta}
\\ &\! +
\underbrace{(1+\beta)\sqrt{K T \ln(1/\pi^\star)}}_{(1+\beta)\eta K T}
+
\underbrace{\sqrt{K T \ln(K/\delta)}}_{\ln(K/\delta)/\beta},
\end{align*}
which simplifies to the announced
\(
\tilde{\mathcal{O}}\!\big(\sqrt{K T}(4\sqrt{\ln(1/\pi^\star)}+2\sqrt{\ln(K/\delta)})\big).
\)
\end{proof}

Then, the result for directness gap convergence rate can be stated as follows:
\begin{theorem}
Let \(R(T) \) denote the regret of regular no-regret learners and \( R^*(T) \) denote the regret of the SE-initiated players, for a given signal type. Then, we have 
\( 
R^*(T) = \tilde{\mathcal{O}}\left(\sqrt{TK}\left(4\sqrt{\ln(1/\pi^*)} + 2\sqrt{\ln(K/\delta)}\right)\right),
\)
which improves upon 
\( 
R(T) = \tilde{\mathcal{O}}\left(\sqrt{TK}\left(4\sqrt{\ln(K)} + 2\sqrt{\ln(K/\delta)}\right)\right).
\)
This result extends to the directness gap convergence rate as
\[
\delta^*(T) \!=\! \frac{R_{ovr}^*(T)}{\kappa T} \!= \!\tilde{\mathcal{O}}\left( \frac{2\sqrt{2K\ln(2K/\delta)}\!+\!4\sqrt{\gamma K}}{\sqrt{T}\kappa}\right)
\]
which provides a tighter bound compared to
\[
\!\delta(T) \! =\! \frac{R_{ovr}(T)}{\kappa T} \!= \!\tilde{\mathcal{O}}\!\left(\! \frac{2\sqrt{2K\ln(2K/\delta)}\!+\!4\sqrt{2K\ln K}}{\sqrt{T}\kappa} \! \right)
\]

Here, $\gamma < 1$ denotes the fraction of timesteps—multiplied by $\ln\left(\frac{1}{\pi^*}\right)$—in which the received signal did \emph{not} lead the players to $\alpha_s = 1$ in the Stackelberg equilibrium.

\end{theorem}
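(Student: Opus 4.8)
The plan is to assemble the claim from three facts already in hand: the single-instance high-probability regret bound of Theorem~\ref{thm:main}, the signal-wise aggregation of Theorem~\ref{thm:two-signal-exp3p-utilities}, and the directness-gap inequality~\eqref{directness_gap_bound}. First I would apply Theorem~\ref{thm:main} separately to each signal $s\in S$, taking as EXP3.P initial weights the per-signal Stackelberg strategy read off from~\eqref{eq:closed_form_solution_final}. The key identification is that, once the constant payment $M>z+y_B$ makes $(I,I)$ the dominant profile (as in the derivation preceding~\eqref{directness_gap_bound}), the best fixed hindsight action for every signal is $I$, so the quantity $\pi_s^*$ in Theorem~\ref{thm:main} equals the initial probability $\alpha_s$ that the Stackelberg solution assigns to $I$. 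Since $\alpha_s\in\{1,\tfrac12\}$, signals that steer the players to $\alpha_s=1$ give $\ln(1/\pi_s^*)=0$, while the remaining signal gives $\ln(1/\pi_s^*)=\ln 2$. This single substitution is what replaces the uniform-initialization value $\ln(1/\pi_s^*)=\ln K$.

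To pin down the baseline $R(T)$, I would specialize the same bound to uniform weights $\pi_i=1/K$, recovering $\ln(1/\pi^*)=\ln K$ and hence the announced $\tilde{\mathcal{O}}(\sqrt{TK}(4\sqrt{\ln K}+2\sqrt{\ln(K/\delta)}))$ rate of~\cite{bubeck2012regret}. With both per-signal rates available, I would invoke Theorem~\ref{thm:two-signal-exp3p-utilities}: allocate $\delta_s=\delta/2$ to each signal, take the union bound, and sum the per-instance bounds to get $R_{ovr}^*(T)\le\sqrt{K}\sum_{s}\sqrt{n_s}\big(4\sqrt{\ln(1/\pi_s^*)}+2\sqrt{\ln(2K/\delta)}\big)$. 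For the confidence term, Jensen's inequality $\sum_s\sqrt{n_s}\le\sqrt{|S|\,T}=\sqrt{2T}$ yields $2\sqrt{2KT\ln(2K/\delta)}$, which after dividing by $\kappa T$ through~\eqref{directness_gap_bound} produces the common leading term $2\sqrt{2K\ln(2K/\delta)}/(\kappa\sqrt{T})$ appearing in both $\delta^*(T)$ and $\delta(T)$.

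The improvement then isolates to the initialization term $4\sqrt{K}\sum_s\sqrt{n_s\ln(1/\pi_s^*)}$. In the baseline both signals carry the full $\ln K$, so Jensen again supplies a $\sqrt2$ and gives $4\sqrt{2KT\ln K}$, i.e.\ the rate $4\sqrt{2K\ln K}/(\kappa\sqrt{T})$. Under the Stackelberg initialization, however, the signal reaching $\alpha_s=1$ contributes exactly zero, so only the surviving $\alpha_s=\tfrac12$ rounds enter the sum; collecting their time-weighted $\ln(1/\pi^*)$ mass and normalizing by $T$ is precisely the definition of $\gamma<1$, so the sum collapses to $4\sqrt{\gamma K T}$ with no extra Jensen factor, i.e.\ the rate $4\sqrt{\gamma K}/(\kappa\sqrt{T})$. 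Because $\gamma<1\le 2\ln K$ for every $K\ge 2$, this term is uniformly smaller, so $\delta^*(T)$ is tighter than $\delta(T)$; substituting both into~\eqref{directness_gap_bound} gives the two displayed expressions.

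I expect the main obstacle to be the identification step of the first paragraph: rigorously arguing that the best fixed hindsight action coincides with the action $I$ that the Stackelberg pre-game endows with the elevated initial probability, so that the smaller $\ln(1/\pi_s^*)$ genuinely governs the per-signal regret rather than a competing arm. A secondary but important point is the aggregation bookkeeping---observing that the vanishing $\alpha_s=1$ contribution removes the $\sqrt2$ Jensen factor for the Stackelberg term, since this is what makes the constant (not merely the logarithm) improve, and ensuring that the definition of $\gamma$ absorbs $\ln(1/\pi^*)=\ln 2$ consistently so that the stated constants $2$ and $4$ match.
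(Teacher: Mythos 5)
Your proposal is correct and follows essentially the same route as the paper's proof: per-signal application of Theorem~\ref{thm:main} (with the Stackelberg initialization making $\ln(1/\pi_s^*)=0$ for the $\alpha_s=1$ signal and $\ln 2$ for the other), aggregation via the union bound and Jensen's inequality on the confidence term as in Theorem~\ref{thm:two-signal-exp3p-utilities}, and division by $\kappa T$ through~\eqref{directness_gap_bound}, with $\gamma$ absorbing the time-weighted $\ln(1/\pi^*)$ mass exactly as the paper does via its $T_1/T_2$ split in the worst case (Case 3). Your explicit justification of the identification $\pi_s^*=\alpha_s$ --- that the payment $M>z+y_B$ makes $I$ the best hindsight action under every signal --- is a step the paper leaves implicit, so you have if anything filled in a detail rather than diverged.
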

\begin{proof}
    First part of the theorem trivially follows from Theorem 5 and $\pi^* \geq \frac{1}{2} =\frac{1}{K}$. To derive the upper bound for $R^*_{ovr}$, we consider the worst SE arising from Case 3. Let $T_1$ denote the good signal rounds, $T_2$ denote the bad signal rounds, and $T=T_1+T_2$. Then, having $K=2=\frac{1}{\pi^*_{T_2}}$: 
    \begin{align*}
    &\! (2\sqrt{K}\!\sqrt{\ln(2K/\delta)}(\!\sqrt{T_1}\!+\!\sqrt{T_2}\!) +  \!4\sqrt{T_2K}\!\sqrt{\ln2})\! \notag \\ \notag
    & \leq (2\sqrt{2(T_1+T_2)K}\sqrt{\ln(2K/\delta)}+ 4\sqrt{T_2K}\sqrt{\ln2}) = f(T)\\ 
    & < (2\sqrt{2TK}\sqrt{\ln(2K/\delta)}+ 4\sqrt{2TK}\sqrt{\ln2}) = g(T) 
    \end{align*}
    where the $\tilde{\mathcal{O}}(f(T)) = R^*_{ovr}(T)$, and $\tilde{\mathcal{O}}(g(T)) = R_{ovr}(T)$. Finally, the second part follows from the above inequalities, along with \eqref{directness_gap_bound}, which concludes the proof.
\end{proof}

\section{Numerical Experiments}
In the experiments, we compare the directness gap of regular players with that of SE-initiated players. We analyze two different cases. In the first case, 
\( 
y_B \geq - \frac{\psi y_G + \frac{z}{2}}{1 - \psi},
\) 
the SE is initiated as
\( 
(\alpha, \beta, \alpha_G, \gamma_G, \alpha_B, \gamma_B) = (\eta, \eta, 1, 1, 1, 1),
\) 
and the convergence results are depicted in Figure~\ref{fig:exp1}. In the second case,
\( 
y_B < - \frac{\psi y_G + \frac{z}{2}}{1 - \psi},
\) 
the SE is initiated as
\( 
(\alpha, \beta, \alpha_G, \gamma_G, \alpha_B, \gamma_B) = (0, 0, 0.5, 0, 1, 1),
\) 
for the chosen parameter values, with the results shown in Figure~\ref{fig:fig2}.

\begin{figure} 
    \centering
    \includegraphics[scale=0.35]{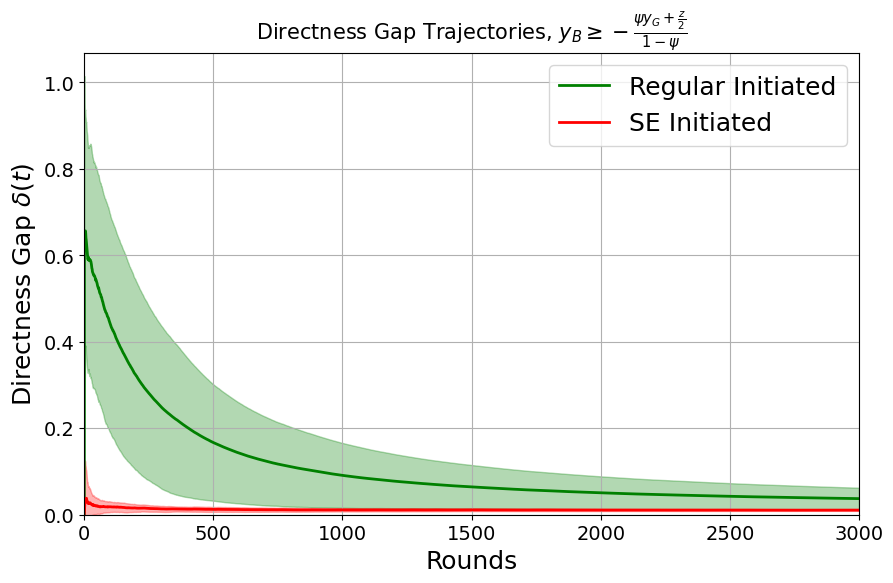}
    \vspace{-0.25cm}
    \caption{Trajectories for the \(y_B\geq - \frac{\psi y_G+\frac{z}{2}}{1-\psi}\) case. With, \(\psi = 0.7\), \(\alpha = 0.7\), \(\beta = 0.7\), \(z = 0.2\), \(y_G = 1\), \(y_B = -0.05\), penalty parameter \(M = 0.24\), the learning rate \(\eta = 0.05\) for the EXP3.P algorithm, .
    }
    \label{fig:exp1}
    \vspace{-0.4cm}
\end{figure}

Each player maintains two separate instances of the EXP3.P algorithm, one for each signal instance. As described in the theoretical analysis of the previous sections, in our experiments, after the mediator commits to a stationary incentive and information-signaling scheme following the initial SE, it maintains this scheme throughout all iterations without updating it.

In Figures~\ref{fig:exp1} and~\ref{fig:fig2}, we report on the evolution of the directness gap, where solid lines denote the mean of 50 independent runs and the shaded area represents one standard deviation around the mean. To preserve exploration, probabilities on the order of \(10^{-2}\) were used in place of zero probabilities that would otherwise be assigned by the SEs at initialization.

Finally, we observe in the numerical experiments that our soft-inducement-assisted prior Stackelberg game framework outperforms the randomly initialized no-regret players, as also proven theoretically.

\section{Conclusion}

In this work, we have addressed the problem of steering no-regret players with incentive and information design. We have shown that successful steering is not feasible through information design alone, or accompanied with sublinear payments. First, we derived a lower bound on the required average payments. Then, we proposed an information design-based initiation of players for the repeated normal-form game. Next, we established improved directness gap convergence rate for the proposed framework. Finally, we supported these improved bounds with numerical experiments. Future work will focus on improving asymptotic regret bounds through carefully crafted information design in games with side information and no-regret players.

\bibliographystyle{IEEEtran}
\bibliography{references}
\begin{figure} 
    \centering 
    \includegraphics[scale=0.35]{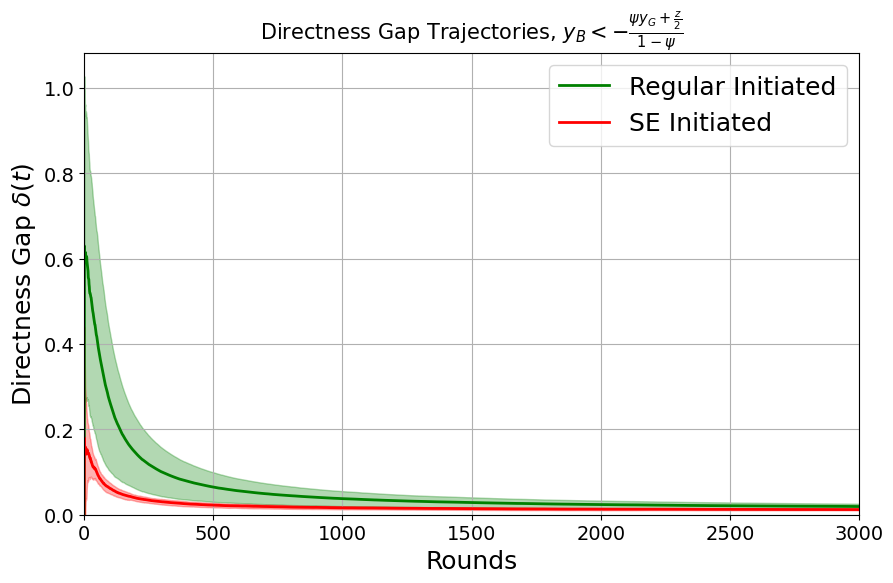}
    \caption{Trajectories for the \(y_B< - \frac{\psi y_G+\frac{z}{2}}{1-\psi}\) case. With, \(\psi = 0.7\), \(\alpha = 0\), \(\beta = 0\), \(z = 0.2\), \(y_G = 0.1\), and \(y_B = -0.56\), penalty parameter \(M = 0.60\), the learning rate \(\eta = 0.05\) for the EXP3.P algorithm.}
    \label{fig:fig2}
\end{figure}
\end{document}